\newcommand{\floor}[1]{\lfloor{#1}\rfloor}
\newcommand{\ceil}[1]{\lceil{#1}\rceil}
\newcommand{\bw}{{\mathop {\rm bw}}}
\newcommand{\cp}{{\mathop {\rm cp}}}
\newcommand{\cl}{{\mathop {\rm cl}}}
\newcommand{\lm}{{\mathop {\rm lm}}}
\newcommand{\lp}{{\mathop {\rm lp}}}
\newcommand{\nd}{{\mathop {\rm nd}}}
\newcommand{\nx}{{\mathop {\rm nx}}}
\newcommand{\pv}{{\mathop {\rm pv}}}
\newcommand{\tw}{{\mathop {\rm tw}}}
\newcommand{\gm}{{\mathop {\rm gm}}}
\newcommand{\cm}{{\mathop {\rm cm}}}
\newcommand{\LT}{{\mathop{\rm LT}}}
\newcommand{\upDDG}{{\mathit up}{\mathop{\rm DDG}}}
\newcommand{\lowDDG}{{\mathit low}{\mathop{\rm DDG}}}
\newcommand{\dist}{{\mathop{\rm dist}}}
\newcommand{\reach}{{\mathop{\rm reach}}}
\newcommand{\ins}{{\mathop {\rm ins}}}
\newcommand{\AAA}{{\mathcal{A}}}
\newtheorem{lemma}{Lemma}
\newtheorem{theorem}{Theorem}
\newenvironment{proof}{\noindent {\bf Proof:~}}{$\Box$ \medskip}
\begin{document}

\begin{center}
{\Large Near-Linear Time Constant-Factor Approximation Algorithm

for Branch-Decomposition of Planar Graphs\footnote{A preliminary version of 
this paper appeared in the Proceedings of the 40th International Workshop on 
Graph-Theoretic Concepts in Computer Science (WG2014) \cite{GX14}.}
}
\vskip 0.2in

Qian-Ping Gu and Gengchun Xu

School of Computing Science, Simon Fraser University\\
Burnaby BC Canada V5A1S6\\
qgu@cs.sfu.ca,gxa2@sfu.ca  
\end{center}

\noindent
{\bf Abstract:}
We give an algorithm which for an input planar graph $G$ of $n$ vertices and 
integer $k$, in $\min\{O(n\log^3n),O(nk^2)\}$ time either constructs a 
branch-decomposition of $G$ with width at most $(2+\delta)k$, $\delta>0$ is a 
constant, or a $(k+1)\times \ceil{\frac{k+1}{2}}$ cylinder minor of $G$ implying 
$\bw(G)>k$, $\bw(G)$ is the branchwidth of $G$. This is the first $\tilde{O}(n)$ 
time constant-factor approximation for branchwidth/treewidth and largest 
grid/cylinder minors of planar graphs and improves the previous 
$\min\{O(n^{1+\epsilon}),O(nk^2)\}$ ($\epsilon>0$ is a constant) time 
constant-factor approximations. For a planar graph $G$ and $k=\bw(G)$, a 
branch-decomposition of width at most $(2+\delta)k$ and a $g\times \frac{g}{2}$ 
cylinder/grid minor with $g=\frac{k}{\beta}$, $\beta>2$ is constant, can be 
computed by our algorithm in $\min\{O(n\log^3n\log k),O(nk^2\log k)\}$ time.

\noindent
{\bf Key words:} Branch-/tree-decompositions, grid minor, planar graphs, 
approximation algorithm.

\section{Introduction}
\label{sec-intro}

The notions of branchwidth and branch-decomposition introduced by Robertson 
and Seymour \cite{RS91} in relation to the notions of treewidth and
tree-decomposition have important algorithmic applications. The branchwidth 
$\bw(G)$ and the treewidth $\tw(G)$ of graph $G$ are linearly related: 
$\max\{\bw(G),2\}\leq \tw(G)+1\leq \max\{\floor{\frac{3}{2}\bw(G)},2\}$ for 
every $G$ with more than one edge, and there are simple translations between 
branch-decompositions and tree-decompositions that meet the linear relations 
\cite{RS91}. A graph $G$ of small branchwidth (treewidth) admits efficient 
algorithms for many NP-hard problems \cite{ALS91,Bod93}. These algorithms first 
compute a branch-/tree-decomposition of $G$ and then apply a dynamic programming 
algorithm based on the decomposition to solve the problem. The dynamic programming 
step usually runs in polynomial time in the size of $G$ and exponential time in 
the width of the branch-/tree-decomposition computed.

Deciding the branchwidth/treewidth and computing a branch-/tree-decomposition of 
minimum width have been extensively studied. For an arbitrary graph $G$ of $n$
vertices, the following results have been known: Given an integer $k$, it is 
NP-complete to decide whether $\bw(G)\leq k$ \cite{ST94} ($\tw(G)\leq k$ 
\cite{ACP87}). If $\bw(G)$ ($\tw(G)$) is upper-bounded by a constant then both 
the decision problem and the optimal decomposition problem can be solved in 
$O(n)$ time \cite{BT97,Bod96}. However, the linear time algorithms are mainly of 
theoretical importance because the constant behind the Big-Oh is huge. The best 
known polynomial time approximation factor is $O(\sqrt{\bw(G)})$ for branchwidth 
and $O(\sqrt{\log \tw(G)})$ for treewidth \cite{FHL08}. The best known exponential
time approximation factors are as follows: an algorithm giving a 
branch-decomposition of width at most $3\bw(G)$ in $2^{O(\bw(G))}n^2$ time 
\cite{RS95}; an algorithm giving a tree-decomposition of width at most $3\tw(G)+4$ 
in $2^{O(\bw(G))}n\log n$ time \cite{Bet13}; and an algorithm giving a 
tree-decomposition of width at most $5\tw(G)+4$ in $2^{O(\tw(G))}n$ time 
\cite{Bet13}. By the linear relation between the branchwidth and treewidth,
the algorithms for tree-decompositions are also algorithms of same approximation
factors for branch-decompositions, while from a branchwidth approximation $\alpha$,
a treewidth approximation $1.5\alpha$ can be obtained.

Better results have been known for planar graphs $G$. Seymour and Thomas show 
that whether $\bw(G)\leq k$ can be decided in $O(n^2)$ time and an optimal 
branch-decomposition of $G$ can be computed in $O(n^4)$ time \cite{ST94}. Gu and
Tamaki improve the $O(n^4)$ time for the optimal branch-decomposition to $O(n^3)$
\cite{GT08}. By the linear relation between the branchwidth and treewidth, the
above results imply polynomial time $1.5$-approximation algorithms for the
treewidth and optimal tree-decomposition of planar graphs. It is open whether
deciding $\tw(G)\leq k$ is NP-complete or polynomial time solvable for planar 
graphs $G$.

Fast algorithms for computing small width branch-/tree-decompositions of planar 
graphs have received much attention as well. Tamaki gives an $O(n)$ time heuristic 
algorithm for branch-decomposition \cite{Tamaki03}. Gu and Tamaki give an algorithm 
which for an input planar graph $G$ of $n$ vertices and integer $k$, either 
constructs a branch-decomposition of $G$ with width at most $(c+1+\delta)k$ or 
outputs $\bw(G)>k$ in $O(n^{1+\frac{1}{c}})$ time, where $c$ is any fixed positive 
integer and $\delta>0$ is any constant \cite{GT11}. By this algorithm and 
a binary search, a branch-decomposition of width at most $(c+1+\delta)k$ can be 
computed in $O(n^{1+\frac{1}{c}}\log k)$ time, $k=\bw(G)$. Kammer and Tholey give 
an algorithm which for input $G$ and $k$, either constructs a tree-decomposition
of $G$ with width $O(k)$ or outputs $\tw(G)>k$ in $O(nk^3)$ time \cite{KT12,KT13}. 
The time complexity of the algorithm is improved to $O(nk^2)$ recently \cite{KT15}. 
This implies that a tree-decomposition of width $O(k)$ can be computed in
$O(nk^2\log k)$ time, $k=\tw(G)$. Computational study on branch-decomposition
can be found in \cite{BG08,BGMTY08,BGZ15,Hicks2005,Hicks2005-2,Smith2012,Tamaki03}.
Fast constant-factor approximation algorithms for branch-/tree-decompositions
of planar graphs have important applications such as that in shortest distance
oracles in planar graphs \cite{MS12}.

Grid minor of graphs is another notion in graph minor theory \cite{RST94}.
A $k\times k$ grid is a Cartesian product of two paths, each on $k$ vertices.
For a graph $G$, let $\gm(G)$ be the largest integer $k$ such that $G$ has
a $k\times k$ grid as a minor. Computing a large grid minor of a graph is
important in algorithmic graph minor theory and bidimensionality theory
\cite{DH05,DFHT05,RST94}. It is shown in \cite{RST94} that
$\gm(G)\leq \bw(G)\leq 4\gm(G)$ for planar graphs. Gu and Tamaki improve the
linear bound $\bw(G)\leq 4\gm(G)$ to $\bw(G)\leq 3\gm(G)$ and show that for any
$a<2$, $\bw(G)\leq a\gm(G)$ does not hold for planar graphs \cite{GT12}. 
Other studies on grid minor size and branchwidth/treewidth of planar graphs 
can be found in \cite{Bod08,Grigoriev11}. The upper bound $\bw(G)\leq 3\gm(G)$ 
is a consequence of a result on cylinder minors. A $k\times h$ cylinder is a 
Cartesian product of a cycle on $k$ vertices and a path on $h$ vertices. 
For a graph $G$, let $\cm(G)$ be the largest integer $k$ such that $G$ has a 
$k\times \ceil{\frac{k}{2}}$ cylinder as a minor. It is shown in \cite{GT12} 
that $\cm(G)\leq \bw(G)\leq 2\cm(G)$ for planar graphs. The $O(n^{1+\frac{1}{c}})$ 
time algorithm in \cite{GT11} actually constructs a branch-decomposition of $G$ 
with width at most $(c+1+\delta)k$ or a
$(k+1)\times \ceil{\frac{k+1}{2}}$ cylinder minor.

We propose an $\tilde{O}(n)$ time constant-factor approximation algorithm for
branch-/tree-decompositions of planar graphs. Our main result is as follows.
\begin{theorem} There is an algorithm which given a planar graph $G$ of $n$
vertices and an integer $k$, in $\min\{O(n\log^3n),O(nk^2)\}$ time either 
constructs a branch-decomposition of $G$ with width at most $(2+\delta)k$, 
$\delta>0$ is a constant, or a $(k+1)\times \ceil{\frac{k+1}{2}}$ cylinder 
minor of $G$.
\label{theo-1}
\end{theorem}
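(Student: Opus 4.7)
The natural starting point is the Gu--Tamaki framework of \cite{GT11}, which, by partitioning the vertices into $c+1$ BFS--layer slabs, already produces a $(c+1+\delta)$-approximation in $O(n^{1+1/c})$ time and, as a by--product, either a branch-decomposition or a $(k+1)\times\ceil{(k+1)/2}$ cylinder minor. Taking $c=1$ would give the desired $(2+\delta)k$ approximation factor, but the running time would degrade to $O(n^2)$. My plan is to retain the layer-partition idea but replace the coarse, single-shot decomposition of each slab by a hierarchical recursive scheme that exploits the tight planar bound $\bw(G)\leq 2\cm(G)$ from \cite{GT12}.

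First, I would grow a BFS layering from a well-chosen root vertex and slide a window of $\approx k$ consecutive layers across the graph. In each window, one searches for a $(k+1)\times\ceil{(k+1)/2}$ cylinder minor; if one is found we output it and halt. Otherwise, each slab is ``cylinder--free'' in the strong sense that the relevant local carving has $\cm$ at most $k/2$, so by the tight bound its branchwidth is at most $(2+\delta)k$. Second, for each cylinder-free slab I would recursively compute a branch-decomposition of width at most $(2+\delta)k$, using an accelerated variant of the exact planar algorithm of \cite{ST94,GT08} applied to the slab's radial subgraph. Third, the per-slab decompositions are glued along their noose boundaries into a branch-decomposition of $G$; because the nooses have length $O(k)$, the width increase from gluing can be absorbed into the $\delta$.

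To make the whole procedure run in $\min\{O(n\log^3 n), O(nk^2)\}$ time, the key is that work at each recursion level is charged against the edges in a distinct slab, so levels contribute only polylogarithmic overhead rather than multiplying the input size. The $O(nk^2)$ branch is obtained by the direct bound of $O(|{\rm slab}|\cdot k^2)$ per slab, summed over a partition of the graph. The $O(n\log^3 n)$ branch replaces the $k^2$-sized subroutine by a faster, data-structure-driven search for cylinder minors and for local sphere-cut boundaries, using hierarchical planar decompositions that maintain the BFS layering under deletions.

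The main obstacle I expect is the gluing step. Unlike the $c\geq 2$ case in \cite{GT11}, with only two slab classes ($c=1$) the boundaries between adjacent slabs carry the bulk of the width, and a naive concatenation can blow up the width beyond $(2+\delta)k$. Handling this cleanly requires choosing the noose boundaries to be short relative to $\cm$ of the surrounding region, and then arguing that the combined decomposition still obeys the tight $2\cm$ relation. This is where the $\delta$ slack is spent, and where the cylinder-minor certificate is essential: a boundary that cannot be made short would itself reveal a large cylinder minor, closing the case by returning the minor instead of the decomposition.
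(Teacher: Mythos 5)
The decisive step is missing. The paper's entire technical contribution is a method for computing, for every ``deep'' biconnected component $Z$ (normal distance $\approx k/2$ beyond a buffer layer from $e_0$), a \emph{minimum} noose separating $Z$ from $e_0$, over all components and all recursion levels, in total time $O(n\log^3 n)$ or $O(nk^2)$. The first bound comes from reducing minimum nooses to minimum face-separating cycles in an auxiliary weighted plane graph and querying the minimum-cycle-basis tree oracle of Borradaile et al.; the second comes from the Kammer--Tholey crest separators together with dense distance graphs on the $O(k)$-vertex separators, plus a nontrivial clean-up step (Lemma~\ref{cycle-decomposition}) to eliminate pairs of intersecting minimum cycles when shortest paths are not unique. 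Your proposal replaces this with ``a faster, data-structure-driven search for cylinder minors and for local sphere-cut boundaries, using hierarchical planar decompositions that maintain the BFS layering under deletions,'' which names no mechanism and is exactly the point where the $O(n^{1+1/c})$ bottleneck of \cite{GT11} at $c=1$ has to be broken. Likewise, your fallback of an ``accelerated variant of the exact planar algorithm of \cite{ST94,GT08}'' is not available: those algorithms run in $\Omega(n^2)$ time and no near-linear acceleration is known; the paper instead uses Tamaki's linear-time construction (Lemma~\ref{lem:branchwidth}) on the shallow hypergraph obtained after contracting the deep parts.

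Your width accounting is also off. The factor $2$ does not come from the planar bound $\bw(G)\le 2\cm(G)$; it comes from the bound $k+2h$ of Lemma~\ref{lem:branchwidth} applied with radius $h=d_1+\ceil{(k+1)/2}\approx k/2+\alpha k/2$, giving $k+2h\le(2+\delta)k$, where $k$ bounds the hyperedge sizes $|\partial(A_Z)|$ certified by the minimum nooses. The cylinder-minor certificate is produced by Lemma~\ref{lem:minor} precisely when some minimum separating noose has length $>k$ across an annulus of normal width $\ceil{(k+1)/2}$; it is not obtained by ``searching each slab for a cylinder minor.'' Finally, the gluing step you worry about is in fact the easy part (Lemma~\ref{lem:decompose} splices decompositions along a shared hyperedge $\partial(A)$ with width $\max\{|\partial(A)|,k_A,k_{\overline A}\}$, no loss at all); the disjointness of the pieces of the good separator, which your sliding-window scheme does not guarantee, is what makes both the gluing and the charging argument for the running time work.
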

Since a $(k+1)\times \ceil{\frac{k+1}{2}}$ cylinder has branchwidth at least $k+1$
\cite{GT12}, a cylinder minor given in Theorem~\ref{theo-1} implies $\bw(G)>k$.

By the linear relation between the branchwidth and treewidth, Theorem~\ref{theo-1} 
implies an algorithm which for an input planar graph $G$ and integer $k$, in 
$\min\{O(nk+n\log^3n),O(nk^2)\}$ time constructs a tree-decomposition of $G$ with 
width at most $(3+\delta)k$ or outputs $\tw(G)>k$. For a planar graph $G$ and 
$k=\bw(G)$, by Theorem~\ref{theo-1} and a binary search, a branch-decomposition 
of $G$ with width at most $(2+\delta)k$ can be computed in 
$\min\{O(n\log^3n\log k),O(nk^2\log k)\}$ time. This improves the previous result 
of a branch-decomposition of width at most $(c+1+\delta)k$ in 
$O(n^{1+\frac{1}{c}}\log k)$ time \cite{GT11}. Similarly, for a planar graph $G$ 
and $k=\tw(G)$, a tree-decomposition of width at most $(3+\delta)k$ can be 
computed in $\min\{O(nk+n\log^3n\log k),O(nk^2\log k)\}$ time. Kammer and Tholey
give an algorithm which computes a tree-decomposition of $G$ with width at most 
$48k+13$ in $O(nk^3\log k)$ time or with width at most $(9+\delta)k+9$ in 
$O(n\min\{\frac{1}{\delta},k\}k^3\log k)$ time ($0<\delta<1$) \cite{KT12,KT13}. 
Recently, Kammer and Tholey give an algorithm for computing weighted treewidth for 
vertex weighted planar graphs \cite{KT15}. Applying this algorithm to planar graph 
$G$, a tree-decomposition of $G$ with width at most $(15+\delta)k+O(1)$ can be 
computed in $O(nk^2\log k)$ time. This improves the result of \cite{KT12,KT13}. 
Our $O(nk^2\log k)$ time algorithm is an independent improvement over the result 
of \cite{KT12,KT13}\footnote{The $O(nk^2\log k)$ time algorithm in \cite{KT15} was 
announced in July 2015 while our our result was reported in March 2015 
\cite{GX15}.} and has a better approximation ratio than that of \cite{KT15}. Our 
algorithm can also be used to compute a $g\times \ceil{\frac{g}{2}}$ cylinder 
(grid) minor with $g=\frac{\bw(G)}{\beta}$, $\beta>2$ is a constant, and a 
$g\times g$ cylinder (grid) minor with $g=\frac{\bw(G)}{\beta}$, $\beta>3$ is a 
constant, of $G$ in $\min\{O(n\log^3n\log k),O(nk^2\log k)\}$ time. This improves 
the previous results of $g\times \ceil{\frac{g}{2}}$ with 
$g\geq \frac{\bw(G)}{\beta}$, $\beta>(c+1)$, and $g\times g$ with 
$g\geq \frac{\bw(G)}{\beta}$, $\beta>(2c+1)$, in $O(n^{1+\frac{1}{c}}\log k)$ 
time. As an application, our algorithm removes a bottleneck in the work of 
\cite{MS12} for computing a shortest path oracle and reduces its preprocessing 
time in Theorem 6.1 from $O(n^{1+\frac{1}{c}}\log k\log n+S\log ^2n)$ to 
$O(\min\{O(n\log^4n\log k),O(nk^2\log n\log k)\}+S\log ^2n)$.

Our algorithm for Theorem~\ref{theo-1} uses the approach in the previous work 
of \cite{GT11} described below. Given a planar graph $G$ and integer $k$, let 
${\cal Z}$ be the set of biconnected components of $G$ with a normal distance 
(a definition is given in the next section) $h=ak$, $a>0$ is a constant, from a 
selected edge $e_0$ of $G$. For each $Z\in {\cal Z}$, a minimum vertex cut set 
$\partial(A_Z)$ which partitions $E(G)$ into edge subsets $A_Z$ and 
$\overline{A}_Z=E(G)\setminus A_Z$ is computed such that $Z\subseteq A_Z$ and 
$e_0\in \overline{A}_Z$, that is, $\partial(A_Z)$ separates $Z$ and $e_0$. If 
$|\partial(A_Z)|>k$ for some $Z\in {\cal Z}$ then $\bw(G)>k$ is concluded. 
Otherwise, a branch-decomposition of graph $H$ obtained from $G$ by removing all 
$A_Z$ is constructed. For each subgraph $G[A_Z]$ induced by $A_Z$, a 
branch-decomposition is constructed or $\bw(G[A_Z])>k$ is concluded recursively. 
Finally, a branch-decomposition of $G$ with width $O(k)$ is constructed from the 
branch-decomposition of $H$ and those of $G[A_Z]$ or $\bw(G)>k$ is concluded. 

The algorithm in \cite{GT11} computes a minimum vertex cut set $\partial(A_Z)$ 
for every $Z\in {\cal Z}$ in all recursive steps in $O(n^{1+\frac{1}{c}})$ time. 
Our main idea for proving Theorem~\ref{theo-1} is to find a minimum vertex cut 
set $\partial(A_Z)$ for every $Z\in {\cal Z}$ more efficiently based on recent 
results for computing minimum face separating cycles and vertex cut sets in 
planar graphs. Borradaile et al. give an algorithm which in $O(n\log^4 n)$ 
time computes an oracle for the all pairs minimum face separating cycle problem 
in a planar graph $G$ \cite{BSW13}. The time for computing the oracle is 
further improved to $O(n\log^3 n)$ \cite{Bet14}. For any pair of faces $f$ and 
$g$ in $G$, the oracle in $O(|C|)$ time returns a minimum $(f,g)$-separating 
cycle $C$ ($C$ cuts the sphere on which $G$ is embedded into two regions, one 
contains $f$ and the other contains $g$). By this result, we show that a minimum 
vertex cut set $\partial(A_Z)$ for every $Z\in {\cal Z}$ in all recursive steps 
can be computed in $O(n\log^3 n)$ time and get the next result.
\begin{theorem} There is an algorithm which given a planar graph $G$ of $n$ 
vertices and an integer $k$, in $O(n\log^3 n)$ time either constructs a 
branch-decomposition of $G$ with width at most $(2+\delta)k$ or a 
$(k+1)\times \ceil{\frac{k+1}{2}}$ cylinder minor of $G$, where $\delta>0$
is a constant.
\label{theo-2}
\end{theorem}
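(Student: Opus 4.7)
\textbf{Proof plan for Theorem~\ref{theo-2}.}
The plan is to keep the recursive skeleton of \cite{GT11} sketched in the introduction and only replace the cut-computation subroutine by calls to the all-pairs minimum face-separating cycle oracle of \cite{BSW13,Bet14}. Concretely, at each recursive step I would pick an edge $e_0$, identify the family $\mathcal{Z}$ of biconnected components at normal distance $h=ak$ from $e_0$, compute for each $Z\in\mathcal{Z}$ a minimum vertex-cut $\partial(A_Z)$ separating $Z$ from $e_0$, abort with a $(k+1)\times\ceil{\frac{k+1}{2}}$ cylinder minor as in \cite{GT11} if any $|\partial(A_Z)|>k$, build a branch-decomposition of the outer ``shell'' $H=G\setminus \bigcup_Z A_Z$ in linear time, recurse on each $G[A_Z]$, and finally glue the pieces into a branch-decomposition of width at most $(2+\delta)k$.

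To implement the cut step in near-linear time I would first reduce the minimum vertex-cut problem to a minimum face-separating cycle problem in the radial (vertex--face incidence) graph $R(G)$: vertex-cuts of $G$ separating $Z$ from $e_0$ are in correspondence with cycles of $R(G)$ separating the face containing $Z$ from the face containing $e_0$, and the minimum cut is exactly the minimum such cycle. Building the oracle of \cite{Bet14} on $R(G)$ takes $O(n\log^3 n)$ time and each query returns a cycle $C$ in $O(|C|)$ time, which is then converted to $\partial(A_Z)$ together with a description of $A_Z$ in the same time. An aborting query can be detected as soon as the returned cycle length exceeds $k$, so the wasted work in the abort case is $O(n)$.

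The whole theorem then boils down to charging the total cut work across the recursion. At one level, each non-aborting query costs $O(k)$, so the per-level query work is $O(|\mathcal{Z}|k)=O(|E(H)|)$, which is absorbed by the shell work. For the recursion itself I would keep a single oracle on the input graph $G$ alive throughout and argue that minimum vertex-cuts in a subproblem $G[A_Z]$ can be read off from face-separating cycles of $R(G)$ restricted to the sub-disk of the sphere bounded by the cycle that produced $A_Z$. This lets every cut computation across the entire recursion be charged to one $O(n\log^3 n)$ oracle construction plus a total $O(n)$ of query time, while the non-cut portions (BFS to locate normal-distance layers, shell decomposition, and gluing) cost $O(n)$ per level over $O(\log n)$ levels, giving the claimed $O(n\log^3 n)$ bound.

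The step I expect to be the main obstacle is exactly this last point: showing that one oracle on $G$ is enough for the whole recursion, i.e.\ that minimum face-separating cycles in $R(G)$ restricted to a sub-disk coincide with minimum face-separating cycles in the radial graph of the induced subgraph $G[A_Z]$, even when the boundary of $A_Z$ creates new faces or shared vertices. If this geometric agreement fails in corner cases one has to rebuild a fresh oracle per subproblem; the fallback argument then has to show, via a charging scheme that uses the edge-disjointness of the $A_Z$'s and the geometric shrinkage forced by $h=ak$, that $\sum_{\text{subproblems}} |A_Z|\log^3 |A_Z|=O(n\log^3 n)$ rather than $O(n\log^4 n)$. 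Getting the constants in the final width to be $(2+\delta)$ rather than $(3+\delta)$ should follow directly from the analysis already carried out in \cite{GT11,GT12}, since the oracle is only used as a drop-in replacement for their cut subroutine.
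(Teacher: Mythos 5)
Your overall skeleton matches the paper's (the GT11 recursion with the minimum face-separating-cycle machinery of Borradaile et al.\ as the cut subroutine), but the time analysis has a real gap exactly where you flagged it, and your fallback does not repair it. The ``single oracle for the whole recursion'' idea is not sound: a subproblem is the hypergraph $G|\overline{A}_Z$ in which $\partial(A_Z)$ has been contracted to a single hyperedge/face, so its metric differs from that of $G$; a minimum separating cycle of the radial graph of $G$ restricted to the sub-disk need not even be a cycle of the subproblem's radial graph, and the subproblem's minimum cycle may use the new boundary face in a way the global oracle cannot report. Your fallback charging $\sum_{\text{subproblems}}|A_Z|\log^3|A_Z|=O(n\log^3 n)$ is false in general, because the subproblems are nested across recursion levels, so $\sum|A_Z|$ is $\Theta(n\cdot\mathrm{depth})$; and the depth is $\Theta(n/k)$, not $O(\log n)$, since each level peels off a shell of normal-distance width only $\Theta(k)$. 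The paper's fix, which your proposal is missing, is to build the cycle oracle (an MCB tree) not on the whole subproblem but only on a bounded shell: a two-level layer tree with radii $d_1=\ceil{\frac{\alpha k}{2}}$ and $d_2=d_1+\ceil{\frac{k+1}{2}}$ confines the weighted graph $H_X$ fed to the oracle to the edges within normal distance $d_2$ of the current boundary, so each edge of $G$ participates in $O(1/\alpha)=O(1)$ oracle constructions over the entire recursion and $\sum_X O(m_X\log^3 m_X)=O(n\log^3 n)$.

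A second omission: you never argue that the cuts $A_Z$ for distinct $Z$ are pairwise disjoint (property (3) of a good separator), without which the shell graph and the gluing of the branch-decompositions via Lemma~\ref{lem:decompose} are not well defined. The paper enforces this by perturbing the edge weights of $H_X$ so that shortest paths are unique (hence minimum separating cycles are laminar), and by selecting for each $Z$ the cycle given by the minimum-weight edge on the MCB-tree path from $f_{\overline X}$ to the \emph{nearest} face $f_Z$, deleting all faces cut off by that edge before continuing. You would need an equivalent of this step; without it, crossing cuts break both the width bound and the recursion.
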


For an input $G$ and integer $k$, Kammer and Tholey give an algorithm which in 
$O(nk^3)$ time constructs a tree-decomposition of width $O(k)$ or outputs 
$\tw(G)>k$ as follows \cite{KT12,KT13}: Convert $G$ into an almost triangulated 
planar graph $\hat{G}$. Use {\em crest separators} to decompose $\hat{G}$ into 
pieces (subgraphs), each piece contains one component (called {\em crest} with a 
normal distance $k$ from a selected set of edges called {\em coast}). For each 
crest compute a vertex cut set of size at most $3k-1$ to separate the crest from 
the coast. If such a vertex cut set can not be found for some crest then the 
algorithm concludes $\tw(\hat{G})>k$. Otherwise, the algorithm computes a 
tree-decomposition for the graph $\hat{H}$ obtained by removing all crests from 
$\hat{G}$ and works on each crest recursively. Finally, the algorithm constructs 
a tree-decomposition of $\hat{G}$ from the tree-decomposition of $\hat{H}$ and 
those of crests. 

To get an $O(nk^2)$ time algorithm for Theorem~\ref{theo-1}, we apply the ideas 
of triangulating $G$ and crest separators in \cite{KT12,KT13} to decompose 
$\hat{G}$ into pieces, each piece having one component (crest) $Z\in {\cal Z}$. 
Instead of finding a vertex cut set of size at most $3k-1$ for each crest, we apply 
the minimum face separating cycle to find a minimum vertex cut set $\partial(A_Z)$ 
in each piece. We show that either a vertex cut set $\partial(A_Z)$ with 
$|\partial(A_Z)|\leq k$ for every $Z\in {\cal Z}$ in all recursive steps or a 
$(k+1)\times \ceil{\frac{k+1}{2}}$ cylinder minor can be computed in $O(nk^2)$ 
time and get the result below.
\begin{theorem} There is an algorithm which given a planar graph $G$ of $n$ 
vertices and an integer $k$, in $O(nk^2)$ time either constructs a 
branch-decomposition of $G$ with width at most $(2+\delta)k$ or a 
$(k+1)\times \ceil{\frac{k+1}{2}}$ cylinder minor of $G$, where $\delta>0$ 
is a constant.
\label{theo-3}
\end{theorem}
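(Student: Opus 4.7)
The plan is to follow the recursive framework of Gu--Tamaki \cite{GT11} sketched just before Theorem~\ref{theo-2}, but to localize the expensive minimum vertex-cut step by first decomposing $G$ into small pieces using the crest-separator machinery of Kammer and Tholey \cite{KT12,KT13}. Once each biconnected component $Z \in \mathcal{Z}$ at normal distance $h = ak$ from the chosen edge $e_0$ sits inside its own piece $P_Z$ of size $O(k^2)$, the minimum vertex-cut $\partial(A_Z)$ can be obtained from a minimum face-separating cycle in $P_Z$ by an algorithm whose running time depends only on $|P_Z|$. Summing over pieces and recursion levels then yields the desired $O(nk^2)$ bound, in contrast to the global $O(n\log^3 n)$ oracle used for Theorem~\ref{theo-2}.

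Concretely, I would first select $e_0$, run a normal-distance BFS from $e_0$ up to level $h = ak$ for a suitable constant $a > 0$, and identify $\mathcal{Z}$ together with the coast of edges at distance $h$. Next I would invoke the Kammer--Tholey crest-separator construction to split $G$ into pieces $P_Z$, one per $Z \in \mathcal{Z}$, each of size $O(k^2)$ and bounded by a crest separator of size $O(k)$; this preprocessing runs within the $O(nk^2)$ budget. Then, inside each piece, I would build the appropriate radial/medial graph and run a standard planar minimum face-separating cycle routine on a graph of size $O(k^2)$. Since the pieces cover $G$ with only $O(k)$-boundary overlap and recursion only re-enters the strictly smaller subgraphs $G[A_Z]$, the total piece sizes over all recursion levels sum to $O(n)$, so the cut step telescopes to $O(nk^2)$ overall.

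If $|\partial(A_Z)| > k$ for some $Z$, I would extract the required $(k+1)\times \ceil{\frac{k+1}{2}}$ cylinder minor from the Menger certificate: failure of the size-$k$ cut produces $k+1$ internally vertex-disjoint paths from $Z$ to $e_0$ inside $P_Z$, and these paths cross the $\Theta(h)$ concentric normal-distance layers, yielding the cylinder structure exactly as in \cite{GT11,GT12}. If every cut is of size at most $k$, recursion returns branch-decompositions of each $G[A_Z]$ and of the residual graph $H = G \setminus \bigcup_{Z \in \mathcal{Z}} A_Z$; these are glued along the sphere-cuts $\partial(A_Z)$ to produce a branch-decomposition of $G$ of width at most $(2+\delta)k$ via the standard sphere-cut gluing used in \cite{GT11}.

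The main obstacle I expect is the tight time accounting. I must show that the Kammer--Tholey preprocessing, originally tailored for $(9+\delta)k$ tree-decompositions in time $O(nk^3)$, can be adapted to produce pieces of size $O(k^2)$ in only $O(nk^2)$ time while still isolating each $Z \in \mathcal{Z}$ in its own piece; that the per-piece minimum face-separating cycle computation can be performed in $O(|P_Z|\cdot k)$ amortized time so that summing over pieces and all recursion depths yields $O(nk^2)$ and not more; and that the cylinder-minor extraction in the failing case relies only on information local to the offending piece and its BFS layers. Controlling the growth of crest-separator overlaps across recursion levels so that piece sizes still sum to $O(n)$ is the most delicate part of the argument.
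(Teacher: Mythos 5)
Your high-level plan is the same as the paper's: keep the recursive Branch-Minor framework, use Kammer--Tholey crest separators to isolate each component $Z$, and find $\partial(A_Z)$ via minimum face-separating cycles. But the quantitative core of your argument rests on a claim that is false and that the paper does not make: the crest-separator decomposition does \emph{not} produce pieces of size $O(k^2)$. The GMST gives one piece per crest, and a single piece can have $\Theta(n)$ edges. The paper's actual mechanism is different: the separating-cycle computation for the crest in a piece $P$ runs on all of $P$, but the \emph{rest} of the graph is compressed into dense distance graphs $\upDDG(S)$ and $\lowDDG(S)$ on the $O(k)$ vertices of each bounding crest separator $S$; each such graph has $O(k^2)$ edges, and the Reif-style cut-along-a-path routine then costs $O((|E(P)|+\Delta(P)k^2)\log^2 k)$ per crest, which telescopes to $O(mk\log^2 k)$ because $\sum_P |E(P)|=O(m)$ and the number of crests is $O(m/k)$. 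That last bound is itself obtained by a step you omit entirely: components $Z$ whose boundary cycle $C_Z$ has length at most $k$ are handled by taking $C_Z$ itself as the separating cycle, and only the components with $|C_Z|>k$ become crests. Without this split, the number of crests can be $\Theta(m)$ and the $O(|{\cal W}_X|k^3)$ cost of computing the dense distance graphs blows up to $O(mk^3)$, losing the claimed bound.

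A second genuine gap is correctness of the gluing. The dense-distance-graph computation uses Thorup's linear-time SSSP with integer weights, so the perturbation trick that guarantees unique shortest paths (used in the $O(n\log^3 n)$ algorithm) is unavailable here. Consequently the minimum separating cycles computed for different crests can intersect, and the induced subsets $A_Z$ need not be pairwise disjoint -- but disjointness is exactly what the good-separator property and the sphere-cut gluing require. The paper spends two lemmas on this: one showing that two intersecting minimum cycles $C_1,C_2$ can be merged into a cycle $C$ with $\ins(C)=\ins(C_1)\cup\ins(C_2)$ of the same length (using the structure of the underlying tree $T_{\cal S}$ and Property (IV) of the DDG edges), and one giving a linear-time leftmost-search procedure that extracts the resulting family of non-crossing maximal cycles. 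Your proposal silently assumes the cuts are compatible. Finally, the cylinder-minor extraction in the failing case is not a Menger argument inside a piece as you describe; it is Lemma~\ref{lem:minor} applied globally, whose hypothesis is that \emph{every} noose separating $Z$ from $\overline{X}$ has length $>k$, certified by the minimality of the computed separating cycle.
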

Theorem~\ref{theo-1} follows from Theorems~\ref{theo-2} and \ref{theo-3}.

The next section gives the preliminaries of the paper. We prove Theorems 
\ref{theo-2} and \ref{theo-3} in Sections 3 and 4, respectively. The final 
section concludes the paper.

\section{Preliminaries}
\label{sec-prel}

It is convenient to view a vertex cut set $\partial(A_Z)$ in a graph as an edge 
in a hypergraph in some cases. A hypergraph $G$ consists of a set $V(G)$ of 
vertices and a set $E(G)$ of edges, each edge is a subset of $V(G)$ with at least 
two elements. A hypergraph $G$ is a graph if for every $e\in E(G)$, $e$ has two 
elements. 
For a subset $A\subseteq E(G)$, we denote $\cup_{e\in A} e$ by $V(A)$ and 
denote $E(G)\setminus A$ by $\overline{A}$. For $A\subseteq E(G)$, the pair 
$(A,\overline{A})$ is a {\em separation} of $G$ and we denote by $\partial(A)$ 
the vertex set $V(A)\cap V(\overline{A})$. The {\em order} of separation 
$(A,\overline{A})$ is $|\partial(A)|$. A hypergraph $H$ is a subgraph of $G$ if 
$V(H)\subseteq V(G)$ and $E(H)\subseteq E(G)$. For $A\subseteq E(G)$ and 
$W\subseteq V(G)$, we denote by $G[A]$ and $G[W]$ the subgraphs of $G$ induced
by $A$ and $W$, respectively. For a subgraph $H$ of $G$, we denote 
$G[E(G)\setminus E(H)]$ by $G\setminus H$.

A walk in graph $G$ is a sequence of edges $e_1,e_2,...,e_k$, where
$e_i=\{v_{i-1},v_i\}$. We call $v_0$ and $v_k$ the {\em end vertices} and other
vertices the {\em internal vertices} of the walk. A walk is a path if all vertices
in the walk are distinct. A walk is a cycle if it has at least three vertices,
$v_0=v_k$ and $v_1,...,v_k$ are distinct. A graph is {\em weighted} if each edge 
of the graph is assigned a weight. Unless otherwise stated, a graph is unweighted.
The {\em length} of a walk in a graph is the number of edges in the walk. The 
length of a walk in a weighted graph is the sum of the weights of the edges in 
the walk.

The notions of branchwidth and branch-decomposition are introduced by Robertson 
and Seymour \cite{RS91}. A {\em branch-decomposition} of hypergraph $G$ is a pair
$(\phi,T)$ where $T$ is a ternary tree and $\phi$ is a bijection from the set of 
leaves of $T$ to $E(G)$. We refer the edges of $T$ as links and the vertices of 
$T$ as nodes. Consider a link $e$ of $T$ and let $L_1$ and $L_2$ denote the sets 
of leaves of $T$ in the two respective subtrees of $T$ obtained by removing $e$. 
We say that the separation $(\phi(L_1),\phi(L_2))$ is induced by this link $e$ of 
$T$. We define the width of the branch-decomposition $(\phi,T)$ to be the largest
order of the separations induced by links of $T$. The {\em branchwidth} of $G$, 
denoted by $\bw(G)$, is the minimum width of all branch-decompositions of $G$. 
In the rest of this paper, we identify a branch-decomposition $(\phi,T)$ with the
tree $T$, leaving the bijection implicit and regarding each leaf of $T$ as a 
edge of $G$.

Let $\Sigma$ be a sphere. For an open segment $s$ homeomorphic to
$\{x|0<x<1\}$ in $\Sigma$, we denote by $\cl(s)$ the closure of $s$.
A planar embedding of a graph $G$ is a mapping 
$\rho: V(G)\cup E(G) \to \Sigma\cup 2^{\Sigma}$ such that
\begin{itemize}
\item for $u\in V(G)$, $\rho(u)$ is a point of $\Sigma$, and for distinct 
$u,v\in V(G)$, $\rho(u)\neq \rho(v)$;
\item for each edge $e=\{u,v\}\in E(G)$, $\rho(e)$ is an open segment in 
$\Sigma$ with $\rho(u)$ and $\rho(v)$ the two end points in
$\cl(\rho(e))\setminus \rho(e)$; and 
\item for distinct $e_1,e_2\in E(G)$,
$\cl(\rho(e_1))\cap \cl(\rho(e_2))=\{\rho(u)|u\in e_1\cap e_2\}$.
\end{itemize}
A graph $G$ is planar if it has a planar embedding $\rho$, and $(G,\rho)$ is 
called a plane graph. We may simply use $G$ to denote the plane graph $(G,\rho)$, 
leaving the embedding $\rho$ implicit. For a plane graph $G$, each connected 
component of $\Sigma\setminus (\cup_{e\in E(G)} \cl(\rho(e)))$ is a face 
of $G$. 
We denote by $V(f)$ and $E(f)$ the set of vertices and the set of edges incident 
to face $f$, respectively. We say that face $f$ is bounded by the edges of $E(f)$.

A graph $G$ of at least three vertices is {\em biconnected} if for any pairwise 
distinct vertices $u,v,w\in V(G)$, there is a path of $G$ between $u$ and $v$ that 
does not contain $w$. Graph $G$ of a single vertex or a single edge is 
(degenerated) biconnected. A {\em biconnected component} of $G$ is a maximum
biconnected subgraph of $G$. It suffices to prove Theorems~\ref{theo-2} and
\ref{theo-3} for a biconnected $G$ because if $G$ is not biconnected, the 
problems of finding branch-decompositions and cylinder minors of $G$ can be 
solved individually for each biconnected component.

For a plane graph $G$, a curve $\mu$ on $\Sigma$ is {\em normal} if $\mu$ does not 
intersect any edge of $G$. The length of a normal curve $\mu$ is the number of 
connected components of $\mu \setminus \bigcup_{v \in V(G)} \{\rho(v)\}$.
For vertices $u,v \in V(G)$, the {\em normal distance} $\nd_G(u,v)$ 
is defined as the shortest length of a normal curve between $\rho(u)$ and 
$\rho(v)$. The {\em normal distance} between two vertex-subsets 
$U,W \subseteq V(G)$ is defined as $\nd_G(U,W) = \min_{u\in U, v\in W} \nd_G(u,v)$.
We also use $\nd_G(U,v)$ for $\nd_G(U,\{v\})$ and $\nd_G(u,W)$ for 
$\nd_G(\{u\},W)$.

A {\em noose} of $G$ is a closed normal curve on $\Sigma$ that does not
intersect with itself. A noose $\nu$ of $G$ 
separates $\Sigma$ into two open regions $R_1$ and $R_2$ and induces a separation 
$(A,\overline{A})$ of $G$ with $A=\{e\in E(G)\mid \rho(e)\subseteq R_1\}$ and
$\overline{A}=\{e\in E(G)\mid \rho(e)\subseteq R_2\}$. We also say $\nu$ induces
edge subset $A$ ($\overline{A}$). A separation (resp. an edge subset) of $G$ is 
called {\em noose-induced} if there is a noose which induces the separation 
(resp. edge subset). A noose $\nu$ separates two edge subsets $A_1$ and $A_2$ 
if $\nu$ induces a separation $(A,\overline{A})$ with $A_1\subseteq A$ and 
$A_2\subseteq \overline{A}$. We also say that the noose induced subset $A$ 
separates $A_1$ and $A_2$.

For plane graph $G$ and a noose $\nu$ induced $A\subseteq E(G)$, we denote by 
$G|A$ the plane hypergraph obtained by replacing all edges of $A$ with edge 
$\partial(A)$ (i.e., $V(G|A)=(V(G)\setminus V(A))\cup \partial(A)$ and 
$E(G|A)=(E(G)\setminus A)\cup \{\partial(A)\}$). An embedding of $G|A$ can 
be obtained from $G$ with $\rho(\partial(A))$ an open disk (homeomorphic to 
$\{(x,y)|x^2+y^2<1\}$) which is the open region separated by $\nu$ and contains 
$A$. For a collection $\AAA=\{A_1,..,A_r\}$ of mutually disjoint noose induced
edge-subsets of $G$, $(..(G|A_1)|..)|A_r$ is denoted by $G|\AAA$. 

\section{$O(n\log^3n)$ time algorithm}
\label{sec-alg1}

We give an algorithm to prove Theorem~\ref{theo-2}. Our algorithm follows the
approach of the work in \cite{GT11}. Let $G$ be a plane graph (hypergraph) of $n$ 
vertices, $e_0$ be an arbitrary edge of $G$ and $k,h>0$ be integers. We first try 
to separate $e_0$ and the subgraph of $G$ induced by the vertices with the normal 
distance at least $h$ from $e_0$. Since the subgraph may not be biconnected, 
let ${\cal Z}$ be the set of biconnected components of $G$ such that for each 
$Z\in {\cal Z}$, $\nd_G(e_0,V(Z))=h$. For each $Z\in {\cal Z}$, our algorithm 
computes a minimum noose induced subset $A_Z$ separating $Z$ and $e_0$. If for 
some $Z\in {\cal Z}$, $|\partial(A_Z)|>k$ then the algorithm constructs a 
$(k+1)\times h$ cylinder minor of $G$ in $O(n)$ time by Lemma~\ref{lem:minor} 
proved in \cite{GT11}. Otherwise, a set $\AAA$ of noose induced subsets with the 
following properties is computed: (1) for every $A_Z\in \AAA$, 
$|\partial(A_Z)|\leq k$, (2) for every $Z\in {\cal Z}$, there is an $A_Z\in \AAA$ 
which separates $Z$ and $e_0$ and (3) for distinct $A_Z,A_{Z'}\in \AAA$, 
$A_Z\cap A_Z'=\emptyset$. Such an $\AAA$ is called a {\em good-separator} for 
${\cal Z}$ and $e_0$.

\begin{lemma} \cite{GT11}
Given a plane graph $G$ and integers $k,h>0$, let $A_1$ and $A_2$ be edge subsets
of $G$ satisfying the following conditions: (1) each of separations
$(A_1,\overline{A_1})$ and $(A_2,\overline{A_2})$ is noose-induced; (2) $G[A_2]$
is biconnected; (3) $\nd_G(V(\overline{A_1}),V(A_2))\geq h$; and (4) every noose of
$G$ that separates $\overline{A_1}$ and $A_2$ has length $>k$. Then $G$ has a
$(k+1)\times h$ cylinder minor and given $(G|\overline{A_1})|A_2$, such a
minor can be constructed in $O(|V(A_1\cap \overline{A_2})|)$ time.
\label{lem:minor}
\end{lemma}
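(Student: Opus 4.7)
The plan is to exploit the planar duality between minimum noose length and maximum number of vertex-disjoint paths in the annular region $A_1\cap\overline{A_2}$. Since condition (1) tells us that both $\partial(A_1)$ and $\partial(A_2)$ are realized by nooses, the hypergraph $(G|\overline{A_1})|A_2$ embeds in the sphere as an annulus with $A_2$ collapsed to one hyperedge (bounding one disk) and $\overline{A_1}$ collapsed to another (bounding the complementary disk). Call this annular plane hypergraph $H$; it has edge set exactly $A_1\cap\overline{A_2}$ together with the two boundary hyperedges, so its complexity is $O(|V(A_1\cap\overline{A_2})|)$.

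First I would produce the ``$k+1$ circular strands.'' Condition (4) says every noose in $G$ separating $\overline{A_1}$ from $A_2$ has length strictly more than $k$. By a standard planar Menger-type duality on the annulus $H$, this translates into the existence of $k+1$ pairwise vertex-disjoint paths $P_0,\ldots,P_k$ in $H$ running from $\partial(A_2)$ to $\partial(A_1)$, and moreover these paths may be chosen so that their cyclic order around the annulus is well-defined. Finding them explicitly can be done in time linear in $H$ by iteratively augmenting or by a sweep around the annulus. Because $G[A_2]$ is biconnected (condition (2)), the vertices of $\partial(A_2)$ lie on a single face cycle of $G\setminus G[A_2]$, so these endpoints come in a cyclic order consistent with the embedding; this is what will eventually give us a cycle factor rather than merely a path factor.

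Next I would produce the ``$h$ level cuts.'' Condition (3) gives normal distance at least $h$ between $V(\overline{A_1})$ and $V(A_2)$. Consequently, traversing any $P_i$ from its $A_2$-endpoint to its $\overline{A_1}$-endpoint encounters at least $h$ vertices, and one can define $h$ nested nooses $\nu_1,\ldots,\nu_h$ around $A_2$ inside $H$ such that each $P_i$ crosses each $\nu_j$ at a distinct vertex $v_{i,j}$. Concretely, I would mark on each $P_i$ the $j$-th vertex at normal distance $j$ from $V(A_2)$ and connect the $v_{\cdot,j}$'s cyclically via short arcs along $\nu_j$ (which exist because the noose can be routed through edges/faces of $H$). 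The $k+1$ paths together with these $h$ transverse cycles form the desired grid-on-a-cylinder topology.

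Finally I would read off the minor. Contract each $P_i$-segment between levels $j$ and $j+1$ to a single edge, contract each connecting arc along $\nu_j$ between consecutive $v_{i,j}$ and $v_{i+1,j}$ (indices modulo $k+1$) to a single edge, and delete everything else. The resulting graph is exactly a $(k+1)\times h$ cylinder on vertex set $\{v_{i,j}\}$. The main obstacle I anticipate is the joint realization step: arguing that the $k+1$ strands and the $h$ nested nooses can be chosen \emph{simultaneously} so that their intersection pattern is combinatorially a grid, i.e.\ no two strands share a level-vertex and no noose cuts any strand more than once. The biconnectivity of $G[A_2]$ and the noose-inducedness of $\partial(A_1)$ are what let us straighten the picture, and the runtime $O(|V(A_1\cap\overline{A_2})|)$ follows because both the Menger augmentation on $H$ and the BFS-type construction of the nested nooses touch each edge of $H$ a constant number of times.
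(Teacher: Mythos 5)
This lemma is not proved in the paper at all: it is imported verbatim from \cite{GT11}, so there is no in-paper argument to compare against, only the citation. Your high-level strategy --- Menger-type duality in the annulus $G[A_1\cap\overline{A_2}]$ to get $k+1$ vertex-disjoint ``radial'' paths from condition (4), and the normal-distance bound (3) to get $h$ ``concentric'' levels --- is indeed the strategy of the cited proof, so the skeleton is right.

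However, there is a genuine gap, and it sits exactly where you say you ``anticipate an obstacle.'' A noose is a normal curve on the sphere: between two consecutive vertices it passes through a \emph{face} of $G$, not along an edge. So the ``short arcs along $\nu_j$'' connecting $v_{i,j}$ to $v_{i+1,j}$ are not subgraphs of $G$, and ``contract each connecting arc to a single edge'' is not a minor operation. To certify a $(k+1)\times h$ cylinder minor you must exhibit, for each level $j$ and each cyclically consecutive pair of strands, an actual path of $G$ joining them, with all of these level-paths pairwise disjoint, disjoint across levels, and meeting the radial paths only in the designated branch sets; the branch sets themselves must be connected subgraphs of $G$. Producing this simultaneous system (in \cite{GT11} it is done by carefully choosing the disjoint paths and using the layer structure of $\reach$ together with the biconnectivity hypothesis) is the entire content of the lemma, not a routine ``straightening'' step, and your sketch defers it. Two smaller points: condition (4) bounds the length of \emph{nooses} separating the two sides, and you need to argue that a minimum vertex cut in the annulus is realizable as a noose before Menger gives you $k+1$ disjoint paths (true in the plane, but it requires the noose-inducedness of $(A_1,\overline{A_1})$ and $(A_2,\overline{A_2})$ and should be said); and the claimed $O(|V(A_1\cap\overline{A_2})|)$ running time does not follow from ``iteratively augmenting,'' since naive augmentation costs $\Theta(k)$ passes --- a genuinely linear-time construction has to exploit the annular embedding.
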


Given a good-separator $\AAA$ for ${\cal Z}$ and $e_0$, our algorithm constructs 
a branch-decomposition of plane hypergraph $G|\AAA$ with width at most $k+2h$ by 
Lemma~\ref{lem:branchwidth} shown in \cite{GT12,Tamaki03}. For each $A_Z\in \AAA$, 
the algorithm computes a cylinder minor or a branch-decomposition for the plane 
hypergraph $G|\overline{A}_Z$ recursively. If a branch-decomposition of 
$G|\overline{A}_Z$ is found for every $A_Z\in \AAA$, the algorithm constructs a 
branch-decomposition of $G$ with width at most $k+2h$ from the 
branch-decomposition of $G|\AAA$ and those of $G|\overline{A}_Z$ by 
Lemma~\ref{lem:decompose} which is straightforward from the definitions of
branch-decompositions.


\begin{lemma}
\label{lem:branchwidth}
\cite{GT12,Tamaki03}
Let $k>0$ and $h>0$ be integers. Let $G$ be a plane hypergraph with each edge of
$G$ incident to at most $k$ vertices. If there is an edge $e_0$ such that for any
vertex $v$ of $G$, $\nd_G(e_0,v)\leq h$ then given $e_0$, a branch-decomposition
of $G$ with width at most $k+2h$ can be constructed in $O(|V(G)|+|E(G)|)$ time.
\end{lemma}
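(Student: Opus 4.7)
The plan is to build the branch-decomposition tree $T$ by a planar sweep rooted at $e_0$. First I would fix the given planar embedding of $G$ on the sphere and compute the normal-distance layering $L_i=\{v\in V(G)\mid \nd_G(e_0,v)=i\}$ for $i=0,1,\ldots,h$; by hypothesis every vertex lies in some $L_i$ with $i\le h$. This layering is produced in $O(|V(G)|+|E(G)|)$ time by a BFS on the radial graph of $G$ starting from $e_0$, which simultaneously yields, for every vertex $v$, a shortest normal curve from $e_0$ to $v$ of length $\nd_G(e_0,v)\le h$.

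Next I would construct $T$ bottom-up using the cyclic order of edges around each vertex supplied by the embedding. Starting with every hyperedge of $G$ as a leaf of $T$, I would greedily merge two subtrees whose underlying edge subsets are adjacent in the embedding, proceeding from the outer layer $L_h$ inward. The merging rule is chosen so that the edge subset $A$ sitting at any internal node of $T$ is separated from $e_0$ by a noose $\nu_A$ built from two shortest-normal-curve pieces (each of length at most $h$) together with a single short arc that crosses the interior of at most one hyperedge of $G$.

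The key invariant to maintain is that every link of $T$ induces a separation $(A,\overline{A})$ whose vertex boundary $\partial(A)$ is covered by such a noose $\nu_A$ of length at most $2h$ crossing at most one hyperedge. Under this invariant the vertices in $\partial(A)$ are either points through which $\nu_A$ passes (at most $2h$, one per unit of length) or vertices of the single crossed hyperedge (at most $k$ by the incidence hypothesis), which yields width at most $k+2h$.

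The main obstacle is specifying the merging rule so that this short-noose invariant is globally preserved and the overall construction is linear-time. A natural way is to derive the merging order directly from the shortest-normal-curve BFS tree from $e_0$: every internal node of $T$ will correspond to a region of the sphere bounded by two tree-paths from $e_0$ (plus at most one hyperedge crossing), and the merges are obtained by a post-order traversal of that tree together with the local cyclic-order information at each vertex. With pointer-based representations of the planar embedding and the BFS tree, each merge costs amortised constant time, meeting the $O(|V(G)|+|E(G)|)$ bound.
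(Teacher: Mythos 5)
You should know at the outset that the paper does not actually prove this lemma: it is imported from \cite{GT12} (Theorem~3.1 there gives the $k+2h$ bound, after adjusting for a factor-of-two difference in the definition of normal distance) and \cite{Tamaki03} (which supplies the linear-time construction). So the comparison is against the construction in those references, and your plan does track it: a BFS on the radial graph from $e_0$ giving shortest normal curves of length at most $h$ to every vertex, separations induced by nooses built from two such curves, and a carving order read off the BFS tree. Your width accounting is also essentially the right one, with one correction of substance: a noose by definition cannot cross the interior of an edge, so the $+k$ term does not come from ``crossing'' a hyperedge but from the noose running along the boundary of, and hence passing through all (at most $k$) vertices of, the single hyperedge at which the two normal curves terminate.

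The genuine gap is the one you name yourself and then do not close: the merging rule. ``Greedily merge two subtrees whose edge subsets are adjacent in the embedding, from the outer layer inward'' does not by itself preserve the short-noose invariant. The union of two noose-induced sets, each bounded by a noose consisting of two normal curves of length at most $h$, is again bounded by such a noose only if the two sets share an entire bounding normal curve, so that the two inner curves cancel and only the two outer ones remain; ``adjacent in the embedding'' does not guarantee this, and after a few unconstrained merges the edge set at an internal node of $T$ can fail to be noose-induced at all, or require a boundary made of many curve pieces. The construction therefore has to be fixed globally rather than greedily: every internal node of $T$ must correspond to a contiguous interval of leaves of the radial BFS/DFS tree in its cyclic order around $e_0$, so that its edge set is exactly what is enclosed between the leftmost and rightmost root-to-leaf normal curves of that interval (this is the content of Tamaki's algorithm). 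Your last paragraph gestures at exactly this (``post-order traversal \ldots together with the local cyclic-order information'') but never verifies that the resulting separations are noose-induced with order at most $k+2h$, and that verification is the entire proof; as written, the invariant in your third paragraph is asserted rather than established.
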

The upper bound $k+2h$ is shown in Theorem 3.1 in \cite{GT12}. The normal distance
in \cite{GT12} between a pair of vertices is twice of the normal distance in
this paper between the same pair of vertices. Tamaki gives a linear time algorithm
to construct a branch-decomposition of width at most $k+2h$ \cite{Tamaki03}.

The following lemma is straightforward from the definition of branch-decompositions 
and allows us to bound the width of the branch-decomposition of the whole graph.

\begin{lemma}
\label{lem:decompose}
Given a plane hypergraph $G$ and a noose-induced separation $(A,\overline{A})$ 
of $G$, let $T_A$ and $T_{\overline{A}}$ be branch-decompositions of 
$G|\overline{A}$ and $G|A$ respectively. Let $T_A+T_{\overline{A}}$ to be the 
tree obtained from $T_A$ and $T_{\overline{A}}$ by joining the link incident
to the leaf $\partial(A)$ in $T_A$ and the link incident to the leaf $\partial(A)$ 
in $T_{\overline{A}}$ into one link and removing the leaves $\partial(A)$. Then 
$T_A+T_{\overline{A}}$ is a branch-decomposition of $G$ with width 
$\max\{|\partial(A)|,k_A,k_{\overline{A}}\}$ where $k_A$ is the
width of $T_A$ and $k_{\overline{A}}$ is the width of
$T_{\overline{A}}$.
\end{lemma}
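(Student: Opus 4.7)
The plan is first to verify that $T_A+T_{\overline{A}}$ is a branch-decomposition of $G$, and then to bound the width by analyzing each link according to whether it came from $T_A$, from $T_{\overline{A}}$, or is the single merged link. Since the leaves of $T_A$ biject with $A\cup\{\partial(A)\}$ and those of $T_{\overline{A}}$ with $\overline{A}\cup\{\partial(A)\}$, identifying the two links incident to the $\partial(A)$-leaves into a single link and deleting those two leaves yields a ternary tree whose leaves biject with $A\cup \overline{A}=E(G)$, which is exactly the combinatorial requirement for a branch-decomposition of $G$.

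For the merged link, removing it separates the leaves into precisely $A$ on one side and $\overline{A}$ on the other, so the induced separation of $G$ is $(A,\overline{A})$, whose order is $|\partial(A)|$ and matches the first term of the claimed bound. For a link $e$ originally in $T_A$ (the case of $T_{\overline{A}}$ being symmetric), removing $e$ from $T_A$ splits its leaves into $L_1\cup\{\partial(A)\}$ and $L_2$, inducing in $G|\overline{A}$ a separation whose order is at most $k_A$; in $T_A+T_{\overline{A}}$ the same link induces the separation $(\phi(L_1)\cup \overline{A},\phi(L_2))$ of $G$.

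The core step is to show that these two separations have the same order. Expanding $V(\phi(L_1)\cup \overline{A})\cap V(\phi(L_2))$ and using $\phi(L_2)\subseteq A$, we have $V(\overline{A})\cap V(\phi(L_2))\subseteq V(\overline{A})\cap V(A)=\partial(A)$, so the vertex set realized in $G$ coincides with $V(\phi(L_1)\cup \{\partial(A)\})\cap V(\phi(L_2))$, which is the order of the separation in $G|\overline{A}$, hence at most $k_A$. An analogous computation handles links from $T_{\overline{A}}$ with bound $k_{\overline{A}}$, giving the overall width $\max\{|\partial(A)|,k_A,k_{\overline{A}}\}$.

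The only mildly technical point is this vertex-set identification; the remainder is bookkeeping from the definition of a branch-decomposition. The noose-induced hypothesis is used only to ensure that $G|A$ and $G|\overline{A}$ are well-defined as plane hypergraphs on which $T_A$ and $T_{\overline{A}}$ can serve as branch-decompositions.
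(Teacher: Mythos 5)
Your proof is correct. The paper gives no proof of this lemma (it is dismissed as ``straightforward from the definitions of branch-decompositions''), and your argument is exactly the intended one: the key identification $V(\overline{A})\cap V(\phi(L_2))=\partial(A)\cap V(\phi(L_2))$ (which follows from $\phi(L_2)\subseteq A$ together with $\partial(A)\subseteq V(\overline{A})$) shows each old link induces a separation of the same order as before, and the merged link contributes exactly $|\partial(A)|$, giving the stated width with equality.
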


To make a concrete progress in each recursive step, the following technique in 
\cite{GT11} is used to compute $\AAA$. For a plane hypergraph $G$, a vertex 
subset $e_0$ of $G$ and an integer $d\geq 0$, let
\[
\reach_G(e_0,d)=\bigcup \{v\in V(G)|\nd_G(e_0,v) \leq d\} 
\]
denote the set of vertices of $G$ with the normal distance at most $d$ from set 
$e_0$. Let $\alpha>0$ be an arbitrary constant. For integer $k\geq 2$, let
$d_1=\ceil{\frac{\alpha k}{2}}$ and $d_2=d_1+\ceil{\frac{k+1}{2}}$. 
The {\em layer tree} $\LT(G,e_0)$ is defined as follows:
\begin{enumerate}
\item the root of the tree is $G$;
\item each biconnected component $X$ of $G[V(G)\setminus \reach_G(e_0,d_1-1)]$ 
is a node in level 1 of the tree and is a child of the root; and
\item each biconnected component $Z$ of $G[V(G)\setminus \reach_G(e_0,d_2-1)]$ is 
a node in level 2 of the tree and is a child of the biconnected component $X$
in level 1 that contains $Z$.
\end{enumerate}
For $h=d_2$, ${\cal Z}$ is the set of leaf nodes of $\LT(G,e_0)$ in level 2. For a 
node $X$ of $\LT(G,e_0)$ in level 1 that is not a leaf, let ${\cal Z}_X$ be the set 
of child nodes of $X$. It is shown in \cite{GT11} (in the proofs of Lemma 4.1) that 
for any $Z\in {\cal Z}_X$, if a minimum noose in the plane hypergraph 
$(G|\overline{X})|{\cal Z}_X$ separating $\{\partial(Z)\}$ and 
$\{\partial(\overline{X})\}$ has length $>k$ then $G$ has a 
$(k+1)\times \ceil{\frac{k+1}{2}}$ cylinder minor. From this, a good-separator 
$\AAA_X$ for ${\cal Z}_X$ and $\overline{X}$ can be computed in hypergraph
$(G|\overline{X})|{\cal Z}_X$, and the union of $\AAA_X$ for every $X$ gives a 
good-separator $\AAA$ for ${\cal Z}$ and $e_0$. 

Notice that if $Z$ is a single vertex then $Z$ will not be involved any further 
recursive step; and if $Z$ is a single edge then there is a noose of length
$2\leq k$ separating $\{\partial(Z)\}$ and $\{\partial(\overline{X})\}$, and
it is trivial to compute the branch-decomposition of $Z$. So we assume without 
loss of generality that each $Z\in {\cal Z}_X$ has at least three vertices. 

To compute $\AAA_X$, we convert $(G|\overline{X})|{\cal Z}_X$ to a weighted plane 
graph and compute a minimum noose induced subset $A_Z$ separating $Z\in {\cal Z}_X$ 
and $\overline{X}$ by finding a minimum face separating cycle in the weighted 
plane graph. We use the algorithm by Borradaile et al. \cite{Bet14} to compute the 
face separating cycles.

For each edge $\partial(Z)$ in $(G|\overline{X})|{\cal Z}_X$, let $\nu_Z$ be 
the noose which induces the separation $(Z,\overline{Z})$ in $G$. Then
$E_Z=\{\nu_X\setminus \rho(u)|u\in \partial(Z)\}$ is a set of open segments.
We first convert hypergraph $(G|\overline{X})|{\cal Z}_X$ into a plane graph $G_X$
as follows: Remove edge $\rho(\partial(\overline{X}))$ and
for each $Z\in {\cal Z}_X$, replace edge $\rho(Z)$ by the set of edges which 
are the segments in $E_Z$.

$G_X$ has a face which contains $\rho(\partial(\overline{X}))$ and we denote this
face by $f_{\overline{X}}$. Notice that 
$V(f_{\overline{X}})=\partial(\overline{X})$ and the edges of
$E(f_{\overline{X}})$ form a cycle because $X$ is biconnected. For each 
$Z\in {\cal Z}_X$, the embedding $\rho(\partial(Z))$ of edge $\partial(Z)$ 
becomes a face $f_Z$ in $G_X$ with $E(f_Z)=E_Z$. A face in $G_X$ which is not 
$f_{\overline{X}}$ or any of $f_Z$ is called a {\em natural face} in $G_X$. Next 
we convert $G_X$ to a weighted plane graph $H_X$ as follows: For each natural 
face $f$ in $G_X$ with $|V(f)|>3$, we add a new vertex $u_f$ and new edges 
$\{u_f,v\}$ in $f$ for every vertex $v$ in $V(f)$. Each new edge $\{u_f,v\}$ is 
assigned the weight $1/2$. Each edge of $G_X$ is assigned the weight 1. Notice 
that $|V(H_X)|=O(|V(G_X)|)$. 

For $Z\in {\cal Z}_X$, a minimum $(f_Z,f_{\overline{X}})$-separating cycle is a 
cycle separating $f_Z$ and $f_{\overline{X}}$ with the minimum length. A noose in 
$G_X$ is called a {\it natural noose} if it intersects only natural faces in $G_X$. 
It is shown (Lemma 5.1) in \cite{GT11} that for each $Z\in {\cal Z}_X$, a minimum 
natural noose in $G_X$ separating $E(f_Z)$ and $E(f_{\overline X})$ in $G_X$ is a 
minimum noose separating $\{\partial(Z)\}$ and $\{\partial(\overline{X})\}$ in 
$(G|\overline{X})|{\cal Z}_X$. By Lemma~\ref{cycle-noose} below, such a natural 
noose $\nu$ can be computed by finding a minimum $(f_Z,f_{\overline{X}})$-separating 
cycle $C$ in $H_X$. The subset $A_Z$ induced by noose $\nu$ in 
$(G|\overline{X})|{\cal Z}_X$ is also called {\em cycle $C$ induced subset}.

\begin{lemma}
Let $H_X$ be the weighted plane graph obtained from $G_X$. For any 
$(f_Z,f_{\overline{X}})$-separating cycle $C$ in $H_X$, there is a natural noose 
$\nu$ which separates $E(f_Z)$ and $E(f_{\overline{X}})$ in $G_X$ with the same 
length as that of $C$. For any minimum natural noose $\nu$ in $G_X$ separating 
$E(f_Z)$ and $E(f_{\overline{X}})$, there is a $(f_Z,f_{\overline{X}})$-separating 
cycle $C$ in $H_X$ with the same length as that of $\nu$.
\label{cycle-noose}
\end{lemma}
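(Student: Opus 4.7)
The plan is to establish two length-preserving correspondences: one that turns any $(f_Z,f_{\overline{X}})$-separating cycle $C$ in $H_X$ into a natural noose $\nu$ in $G_X$ separating $E_Z$ and $E_{\overline{X}}$, and a reverse one that turns any minimum such natural noose into a separating cycle in $H_X$. Both constructions key on the subsequence of $V(G_X)$-vertices visited in cyclic order, exploiting the fact that each auxiliary vertex $u_f$ lies in the interior of the natural face $f$ and is connected to every vertex of $V(f)$ by weight-$1/2$ edges.

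For the forward direction, I would extract the vertices $v_1,\ldots,v_k$ of $V(G_X)$ appearing on $C$ in cyclic order (skipping the auxiliary vertices $u_f$). Between a consecutive pair $v_i,v_{i+1}$, cycle $C$ either traverses a single original edge $\{v_i,v_{i+1}\}$ of $G_X$ (weight $1$) or two edges $\{v_i,u_f\}$ and $\{u_f,v_{i+1}\}$ through some auxiliary vertex $u_f$ (weights summing to $1$). In the second case, face $f$ is natural and has both $\rho(v_i)$ and $\rho(v_{i+1})$ on its boundary, so I route an arc of $\nu$ from $\rho(v_i)$ to $\rho(v_{i+1})$ through $f$. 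In the first case, since $E_Z$ and $E_{\overline{X}}$ are disjoint subsets of $E(G_X)$, at most one of the two faces incident to $\{v_i,v_{i+1}\}$ is non-natural, so I route the arc through an incident natural face. Concatenating the $k$ arcs yields a closed normal curve through exactly $k$ vertices, so its length equals the length of $C$; routing the arcs infinitesimally close to $C$ keeps it non-self-intersecting, and $\nu$ separates $E_Z$ from $E_{\overline{X}}$ because $C$ separates $f_Z$ from $f_{\overline{X}}$.

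For the reverse direction, let $v_1,\ldots,v_k$ be the vertices of $V(G_X)$ that a minimum natural noose $\nu$ passes through in cyclic order, so $\nu$ has length $k$. For each arc of $\nu$ from $v_i$ to $v_{i+1}$ through a natural face $f$, if $|V(f)|>3$ I take the two new edges $\{v_i,u_f\}$ and $\{u_f,v_{i+1}\}$ of $H_X$ with combined weight $1$; if $f$ is triangular, then $v_i$ and $v_{i+1}$ are adjacent in $G_X$ via a bounding edge of $f$, and I take that original edge of weight $1$. Concatenating these choices yields a closed walk $C$ of total length $k$ in $H_X$ that separates $f_Z$ from $f_{\overline{X}}$; minimality of $\nu$ forbids repeated vertices, which lets me conclude that $C$ is in fact a simple cycle.

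The main obstacle I expect is verifying that the constructed objects are genuinely simple and that the separation property is faithfully preserved under both translations. The resolution is that $C$ (respectively $\nu$) is already a simple closed curve in the plane cutting $\Sigma$ into two regions that contain $f_Z$ and $f_{\overline{X}}$ (respectively, boundaries of edges of $E_Z$ and $E_{\overline{X}}$); a consistent parallel shift into the adjacent natural faces on one side preserves this separation, and any apparent crossings introduced by the routing choices can be eliminated by a local re-routing that does not increase length, which is where the minimality hypothesis in the second direction does its work.
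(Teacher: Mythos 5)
Your proposal is correct and follows essentially the same route as the paper's proof: translate each weight-$1$ edge of $C$ into an arc through an incident natural face and each weight-$\tfrac12$ pair through $u_f$ into an arc through $f$ (and conversely), with length preserved because both cost $1$ per consecutive pair of $V(G_X)$-vertices, and with minimality of $\nu$ invoked in the reverse direction to ensure at most one sub-curve per natural face so that the resulting walk is a simple cycle. No substantive differences from the paper's argument.
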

\begin{proof}
Let $C$ be a $(f_Z,f_{\overline{X}})$-separating cycle in $H_X$. For each edge 
$\{u,v\}$ in $C$ with $u,v\in V(G_X)$, $\{u,v\}$ is incident to a natural face
$f$ because $f_Z$ is not incident to $f_{\overline X}$ by
$\nd_{G_X}(V(\overline{X}),V(Z))=\ceil{\frac{k+1}{2}}$. We draw a simple curve 
with $u,v$ as its end points in face $f$. For each pair of edges $\{u,u_f\}$ and 
$\{u_f,v\}$ in $C$ with $u,v\in V(G_X)$ and $u_f\in V(H_X)\setminus V(G_X)$, we 
draw a simple curve with $u,v$ as its end points in the face $f$ of $G_X$ where 
the new added vertex $u_f$ is placed. Then the union of the curves form a natural 
noose $\nu$ which separates $E(f_Z)$ and $E(f_{\overline{X}})$ in $G_X$. Each of 
edge $\{u,v\}$ with $u,v\in V(G_X)$ is assigned weight 1. For a new added vertex 
$u_f$, each of edges $\{u,u_f\},\{u_f,v\}$ is assigned weight $1/2$. Therefore, 
the lengths of $\nu$ and $C$ are the same.

Let $\nu$ be a minimum natural noose separating $E(f_Z)$ and $E(f_{\overline{X}})$ 
in $G_X$. Then $\nu$ contains at most two vertices of $G_X$ incident to a same 
natural face of $G_X$, otherwise a shorter natural noose separating $E(f_Z)$ and 
$E_{\overline{X}}$ can be formed. The vertices on $\nu$ partitions $\nu$ into a 
set of simple curves such that at most one curve is drawn in each natural face 
of $G_X$. For a curve with the end points $u$ and $v$ in a natural face $f$, if 
$\{u,v\}$ is an edge of $G_X$ then we take $\{u,v\}$ in $H_X$ as a candidate, 
otherwise we take edges $\{u,u_f\},\{u_f,v\}$ in $H_X$ as candidates, where $u_f$ 
is the vertex added in $f$ in getting $H_X$. These candidates form a 
$(f_Z,f_{\overline{X}})$-separating cycle $C$ in $H_X$. Because each edge 
of $G_X$ is given weight 1 and each added edge is given weight $1/2$ in $H_X$, 
the lengths of $C$ and $\nu$ are the same.
\end{proof} 

We assume that for every pair of vertices $u,v$ in $H_X$, there 
is a unique shortest path between $u$ and $v$. This can be realized by 
perturbating the edge weight $w(e)$ of each edge $e$ in $H_X$ as follows.
Assume that the edges in $H_X$ are $e_1,...e_m$. For each edge $e_i$, let 
$w'(e_i)=w(e_i)+\frac{1}{2^{i+1}}$. Then it is easy to check that for any pair of 
vertices $u$ and $v$ in $H_X$, there is a unique shortest path between $u$ and 
$v$ w.r.t. to $w'$; and the shortest path between $u$ and $v$ w.r.t. $w'$ is a
shortest path between $u$ and $v$ w.r.t. $w$.

For a plane graph $G$, a {\em minimum cycle base tree} (MCB tree) introduced
in \cite{BSW13} is an edge-weighted tree $\tilde{T}$ such that
\begin{itemize}
\item there is a bijection from the faces of $G$ to the nodes of $\tilde{T}$;
\item removing each edge $e$ from $\tilde{T}$ partitions $\tilde{T}$ into two 
subtrees $\tilde{T}_1$ and $\tilde{T}_2$; this edge $e$ corresponds to a cycle 
which separates every pair of faces $f$ and $g$ with $f$ in $\tilde{T}_1$ and 
$g$ in $\tilde{T}_2$; and
\item for any distinct faces $f$ and $g$, the minimum-weight edge on the
unique path between $f$ and $g$ in $\tilde{T}$ has weight equal to the length
of a minimum $(f,g)$-separating cycle.
\end{itemize}

The next lemma gives the running time for computing a MCB tree of a
plane graph and that for obtaining a cycle from the MCB tree.
\begin{lemma}\cite{Bet14}
Given a plane graph $G$ of $n$ vertices with positive edge weights, a MCB tree 
of $G$ can be computed in $O(n\log^3n)$ time. Further, for any distinct faces $f$ 
and $g$ in $G$, given a minimum weight edge in the path between $f$ and $g$ in
the MCB tree, a minimum $(f,g)$-separating cycle $C$ can be obtained in $O(|C|)$ 
time, $|C|$ is the number of edges in $C$.
\label{lem-mcb}
\end{lemma}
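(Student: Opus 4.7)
My plan is to construct the MCB tree via a Gomory--Hu-style recursion on the faces of $G$. I would start with all faces placed at a single tree node. In each iteration, pick two distinct faces $f,g$ currently assigned to the same node, compute a minimum $(f,g)$-separating cycle $C$ of length $\ell$, and split that node into two children connected by a tree edge of weight $\ell$, putting the faces on the same side of $C$ as $f$ into one child and the rest into the other. Standard Gomory--Hu arguments then give the defining MCB property: the minimum-weight edge along the path between any two faces equals the length of a minimum separating cycle between them. Since a plane graph on $n$ vertices has $O(n)$ faces, the recursion produces the desired tree after $O(n)$ iterations.

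Each individual minimum $(f,g)$-separating cycle I would compute through planar duality: a minimum face-separating cycle in $G$ corresponds to a minimum $f$-$g$ cut in the dual graph, which, by the standard primal--dual duality on the sphere, may be obtained as a shortest cycle through a suitably chosen edge that encloses exactly one of the two faces. Using Klein's multiple-source shortest-path (MSSP) structure, preprocessed in $O(n\log n)$ time, a single such cycle can be found in $O(n\log n)$ time by sweeping the root of the dual shortest-path tree around a face.

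The hard part will be executing all $O(n)$ Gomory--Hu rounds within the claimed $O(n\log^3 n)$ total time, rather than the naive $O(n^2\log n)$ one would get by recomputing an MSSP tree in each round. The accelerator I would use is a recursive decomposition of the plane graph driven by balanced shortest-path separators: at each level of recursion, identify a short cycle separator that splits the face set roughly in half, recursively build MCB subtrees on the two sides, and glue them via this separator as an edge of the MCB tree. The recursion has depth $O(\log n)$; each level performs $O(n\log n)$ work through MSSP queries; and an additional $O(\log n)$ factor arises from maintaining dynamic dual shortest-path data across recursive calls. This matches the improvement of \cite{Bet14} over the earlier $O(n\log^4 n)$ construction of \cite{BSW13}, and the analysis is the most delicate part of the argument.

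For the cycle-recovery claim I would store with each MCB tree edge a compact witness of its associated minimum separating cycle, for instance the dual arc of the cut together with the pair of faces used at the moment the edge was created. Given such a witness, traversing around $C$ in the primal graph (equivalently, along the boundary of the dual cut in the MSSP tree) recovers its edges one by one in $O(|C|)$ time, giving the stated output-sensitive bound.
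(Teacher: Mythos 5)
The paper does not prove this lemma at all: it is imported verbatim from \cite{Bet14} (building on \cite{BSW13}), and the authors use it as a black box. So the only fair comparison is between your sketch and the actual proofs in those references. Your outline does reconstruct the right architecture --- a Gomory--Hu tree computation in the planar dual (minimum $(f,g)$-separating cycles being exactly minimum cuts between the dual vertices of $f$ and $g$), individual cuts via Reif's ``shortest cycle crossing a path'' duality with an MSSP structure, and a balanced recursive decomposition to amortize the $O(n)$ cut computations. That is genuinely the strategy of \cite{BSW13,Bet14}.

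However, as a proof the proposal has real gaps, and they sit exactly where the cited papers do their work. First, ``standard Gomory--Hu arguments'' are not enough in this setting: one must show the computed separating cycles can be chosen to form a laminar (non-crossing) family so that the tree is well defined and the recursion can contract regions consistently; this uncrossing step is a substantive lemma in \cite{BSW13}, not a freebie. Second, your time analysis is an assertion, not an argument: ``$O(\log n)$ levels times $O(n\log n)$ MSSP work times an extra $O(\log n)$'' does not explain why the $O(n)$ separate cut computations telescope rather than each costing $\Omega(n\log n)$; the dense-distance-graph bookkeeping that makes this work, and the specific saving that turns $\log^4$ into $\log^3$, is the entire content of \cite{Bet14}. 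Third, the $O(|C|)$ retrieval claim does not follow from your witness scheme: walking a shortest path inside an MSSP structure costs $O(\log n)$ amortized per edge, not $O(1)$, so ``traversing around $C$ in the MSSP tree'' gives $O(|C|\log n)$ at best; the references achieve $O(|C|)$ by explicitly storing the laminar cycle family in a compressed region decomposition. In short, you have correctly identified the skeleton of the known proof, but each of the three load-bearing steps is deferred rather than carried out, and the retrieval step as literally described would not meet the stated bound.
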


Using Lemma~\ref{lem-mcb} for computing a MCB tree $\tilde{T}$ of $H_X$ and thus
$\AAA_X$, our algorithm is summarized in Procedure Branch-Minor below. In the 
procedure, $U$ is a noose induced edge subset and initially $U=\{e_0\}$.

\noindent{\bf Procedure} Branch-Minor($G|U$)\\
{\bf Input:} A biconnected plane hypergraph $G|U$ with $\partial(U)$ specified,
$|\partial(U)|\leq k$ and every other edge has two vertices. \\
{\bf Output:} Either a branch-decomposition of $G|U$ of width at most $k+2h$,
$h=d_2$, or a $(k+1)\times \ceil{\frac{k+1}{2}}$ cylinder minor of $G$.
\begin{enumerate}
\item If $\nd_{G|U}(\partial(U),v)\leq h$ for every $v\in V(G|U)$ then apply
Lemma~\ref{lem:branchwidth} to find a branch-decomposition of $G|U$. Otherwise,
proceed to the next step.

\item Compute the layer tree $\LT(G|U,\partial(U))$.

For every node $X$ of $\LT(G|U,\partial(U))$ in level 1 that is not a leaf, 
compute $\AAA_X$ as follows:
\begin{enumerate}
\item Compute $H_X$ from $(G|\overline{X})|{\cal Z}_X$.
\item Compute a MCB tree $\tilde{T}$ of $H_X$ by Lemma~\ref{lem-mcb}.
\item Find a face $f_Z$, $Z\in {\cal Z}_X$, in $\tilde{T}$ by a breadth first
search from $f_{\overline X}$ such that the path between $f_Z$ and 
$f_{\overline X}$ in $\tilde{T}$ does not contain $f_{Z'}$ for any 
$Z'\in {\cal Z}_X$ with $Z'\neq Z$. Find the minimum weight edge $e_Z=\{u,v\}$ 
in the path between $f_Z$ and $f_{\overline X}$, and the cycle $C$ from edge 
$e_Z$.

If $C$ has length $>k$, then compute a $(k+1)\times \ceil{\frac{k+1}{2}}$ 
cylinder minor by Lemma~\ref{lem:minor} and terminate.

Otherwise, compute the cycle $C$ induced subset $A_Z$ and include $A_Z$ to
$\AAA_X$. For each node $f$ of $\tilde{T}$, if edge $e_Z$ is in the path between 
$f$ and $f_{\overline X}$ in $\tilde{T}$ then delete $f$ from $\tilde{T}$.

Repeat the above until $\tilde{T}$ does not contain any $f_Z$ for 
$Z\in {\cal Z}_X$.

\end{enumerate}
Let $\AAA=\cup_{X: \mbox{level 1 node}} {\cal A}_X$ and proceed to the next step.

\item For each $A\in \AAA$, 
call Branch-Minor($G|\overline{A}$) to construct a branch-decomposition $T_{A}$
or a cylinder minor of $G|\overline{A}$.

If a branch-decomposition $T_A$ is found for every $A\in \AAA$,
Lemma~\ref{lem:branchwidth} is applied to $(G|U)|\AAA$ to construct a
branch-decomposition $T_0$ of $(G|U)|\AAA$ and Lemma~\ref{lem:decompose} is
used to combine these branch-decompositions $T_A$, $A\in \AAA$, and $T_0$
into a branch-decomposition $T$ of $G|U$ and return $T$.
\end{enumerate}

Now we prove Theorem~\ref{theo-2} which is re-stated below.
\begin{theorem} There is an algorithm which given a planar graph $G$ of $n$
vertices and an integer $k$, in $O(n\log^3n)$ time either constructs a 
branch-decomposition of $G$ with width at most $(2+\delta)k$, $\delta>0$ is a 
constant, or a $(k+1)\times \ceil{\frac{k+1}{2}}$ cylinder minor of $G$.
\label{theo-4}
\end{theorem}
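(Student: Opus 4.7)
The plan is to verify the three claims about Procedure Branch-Minor called with $U=\{e_0\}$: (i) the set $\AAA$ assembled in step~2 is a good-separator, (ii) the branch-decomposition glued in step~3 has width at most $k+2h$ with $h=d_2=\ceil{\alpha k/2}+\ceil{(k+1)/2}$, and (iii) the overall time is $O(n\log^3 n)$. The width bound converts via $k+2h\leq(2+\alpha)k+O(1)$ into $(2+\delta)k$ by choosing $\alpha<\delta$ and absorbing the additive constant, with the finitely many small $k$ handled by the exact $O(n^2)$ algorithm of Seymour--Thomas~\cite{ST94}.

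For (i), fix a level-1 node $X$ of $\LT(G|U,\partial(U))$ and $Z\in{\cal Z}_X$. By Lemma~\ref{lem-mcb}, the minimum-weight edge $e_Z$ on the unique $\tilde T$-path from $f_{\overline X}$ to $f_Z$ yields a minimum $(f_Z,f_{\overline X})$-separating cycle $C$ in $H_X$, which by Lemma~\ref{cycle-noose} corresponds to a minimum natural noose separating $E_Z$ and $E_{\overline X}$ in $G_X$. By the equivalence established in~\cite{GT11}, this noose is a minimum noose in $(G|\overline X)|{\cal Z}_X$ separating $Z$ and $\overline X$. If its length exceeds $k$, Lemma~\ref{lem:minor} (with $\overline X$ playing the role of $\overline{A_1}$, $Z$ that of $A_2$, and using the guarantee $\nd\geq\ceil{(k+1)/2}$) produces a $(k+1)\times\ceil{(k+1)/2}$ cylinder minor of $G$ in $O(n)$ time and the algorithm halts correctly. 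Otherwise $|\partial(A_Z)|\leq k$, and the breadth-first face-removal sweep in step~2(c) ensures that every $Z'\in{\cal Z}_X$ is separated from $\overline X$ by some $A\in\AAA_X$ and that the chosen $A_Z$'s correspond to disjoint subtrees of $\tilde T$, hence are pairwise edge-disjoint. The union $\AAA=\bigcup_X \AAA_X$ is therefore a good-separator for ${\cal Z}$ and $\partial(U)$.

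For (ii), every edge of $(G|U)|\AAA$ is incident to at most $k$ vertices (the marker $\partial(U)$ and each $\partial(A)$ by (i) and induction, other edges trivially), and every vertex lies within normal distance $h=d_2$ of $\partial(U)$ by the construction of $\LT$. Lemma~\ref{lem:branchwidth} thus yields $T_0$ of width at most $k+2h$ in linear time. Recursive calls on each $G|\overline A$ either propagate a cylinder minor upward or, by induction on the edge count, return $T_A$ of width at most $k+2h$. Iterated application of Lemma~\ref{lem:decompose} along the noose-induced separations $(\overline A,A)$ glues $T_0$ and the $T_A$'s into a branch-decomposition of $G|U$ of width $\max\{k,k+2h\}=k+2h$.

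The main obstacle is (iii). At a single call on $G|U$, constructing $\LT$ and each auxiliary graph $H_X$ takes $O(|E(G|U)|)$ time, the MCB tree $\tilde T_X$ costs $O(|V(H_X)|\log^3|V(H_X)|)$ by Lemma~\ref{lem-mcb}, and cycle and induced-subset extraction adds $O(|E(H_X)|)$; since distinct $H_X$'s at a single call are edge-disjoint portions of $G|U$, the total per-call MCB cost is $O(|E(G|U)|\log^3 n)$. For the recursion, the sub-instances $\{G|\overline A\}_{A\in\AAA}$ have edge multisets $\{A\cup\{\partial(A)\}\}_{A\in\AAA}$ that partition $E(G|U)\setminus E((G|U)|\AAA)$ up to one marker per sub-instance; combined with the observation that the total number of marker edges introduced across the entire recursion is bounded by the total good-separator size, itself $O(n)$ by the planar noose-counting argument of~\cite{GT11}, the superadditivity of $m\mapsto m\log^3 m$ on disjoint summands bounds the summed MCB cost by $O(n\log^3 n)$. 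The linear-time applications of Lemmas~\ref{lem:branchwidth} and~\ref{lem:decompose}, along with at most one invocation of Lemma~\ref{lem:minor}, contribute only $O(n)$, completing the proof of Theorem~\ref{theo-4}.
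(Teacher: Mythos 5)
Your parts (i) and (ii) are essentially the paper's argument and are fine (the paper gets pairwise disjointness of the $A_Z$ from the uniqueness-of-shortest-paths perturbation rather than from disjoint subtrees of $\tilde T$, but both work). The genuine gap is in part (iii), the running-time accounting, which you correctly identify as the main obstacle but then resolve incorrectly. You bound the per-call cost by $O(|E(G|U)|\log^3 n)$, observe that the child instances $G|\overline{A}$ essentially partition the edges of the parent, and invoke superadditivity of $m\mapsto m\log^3 m$. But superadditivity only says the children's costs sum to at most the parent's cost; you still pay the parent's $O(|E(G|U)|\log^3 n)$ at every level of the recursion, and the recursion depth is not bounded by a constant --- each level peels off a normal-distance annulus of width $O(k)$, so the depth can be $\Theta(n/k)$. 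A recurrence $T(m)=O(m\log^3 m)+\sum_i T(m_i)$ with $\sum_i m_i\leq m$ gives only $O(mD\log^3 m)$ for depth $D$, which is far from near-linear.

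The missing idea is locality: the entire Step~2 computation at a call on $G|U$ (layer tree, the graphs $H_X$, the MCB trees) must be charged only to $M=|E(G[\reach_{G|U}(\partial(U),d_2)])|$, the edges within normal distance $d_2$ of the current boundary, not to the whole instance; and then one argues that each vertex/edge of $G$ lies in such a depth-$d_2$ neighborhood for only $O(d_2/d_1)=O(1+1/\alpha)=O(1)$ recursive calls, because every level of recursion pushes the boundary forward by at least $d_1=\ceil{\frac{\alpha k}{2}}$ in normal distance while only looking $d_2=d_1+\ceil{\frac{k+1}{2}}$ ahead. Summing $O(M\log^3 M)$ over all calls with this multiplicity bound gives $O(n\log^3 n)$; this is exactly where the constant $\alpha$ in the layer-tree definition earns its keep, and it is absent from your argument. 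A secondary, fixable slip: dispatching ``finitely many small $k$'' to the $O(n^2)$ Seymour--Thomas algorithm \cite{ST94} breaks the claimed $O(n\log^3 n)$ bound; for bounded $k$ one should instead use a linear-time routine such as \cite{BT97}, or simply absorb the additive $+4$ into $\delta$ for $k$ above a threshold depending on $\delta-\alpha$.
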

\begin{proof}
The input hypergraph $G|\overline{A}$ of our algorithm in each recursive step 
for $A\in \AAA$ is biconnected. For the $\AAA_X$ computed in Step 2, obviously 
(1) for every $A_Z\in \AAA_X$, $|\partial(A_Z)|\leq k$; (2) due to the way we 
find the cycles from the MCB tree, for every $Z\in {\cal Z}_X$, there is exactly 
one noose-induced subset $A_Z\in \AAA_X$ separating $Z$ and $\overline{X}$; and 
(3) from the unique shortest path in $H_X$, for distinct $A_Z,A_{Z'}\in \AAA_X$,
$A_Z\cap A_{Z'}=\emptyset$. Therefore, $\AAA_X$ is a good-separator 
for ${\cal Z}_X$ and $\overline{X}$. From this, $\AAA$ is a good separator
for ${\cal Z}$ and $U$ and our algorithm computes a branch-decomposition 
or a $(k+1)\times \ceil{\frac{k+1}{2}}$ cylinder minor of $G$. The width of the 
branch-decomposition computed is at most 
\[
k+2h = k+2(d_1+\ceil{\frac{k+1}{2}}) \leq 
k+2(\ceil{\frac{\alpha k}{2}})+(k+2) \leq (2+\delta)k,
\]
where $\delta$ is the smallest constant with $\delta k\geq \alpha k+4$.

Let $M,m_x,m$ be the numbers 
of edges in $G[\reach_{G|U}(\partial(U),d_2)],(G|\overline{X})|{\cal Z}_X,H_X$, 
respectively. Then $m=O(m_x)$. In Step 2, the layer tree $\LT(G|U,\partial(U))$ 
can be computed in $O(M)$ time. For each level 1 node $X$, it takes $O(m)$ time 
to compute $H_X$ and by Lemma~\ref{lem-mcb}, it takes $O(m\log^3m)$ time to
compute a MCB tree $\tilde{T}$ of $H_X$. In Step 2(c), it takes $O(m)$ time to 
compute a cylinder minor by Lemma~\ref{lem:minor}. From Property (3) of a 
good-separator, each edge of $H_X$ appears in at most two cycles which induce 
the subsets in $\AAA_X$. So Step 2(c) takes $O(m)$ time to compute $\AAA_X$. 
Therefore, the total time for Steps 2(a)-(c) is $O(m\log^3m)$. For distinct 
level 1 nodes $X$ and $X'$, the edge sets of subgraphs 
$(G|\overline{X})|{\cal Z}_X$ and $(G|\overline{X'})|{\cal Z}_{X'}$ are disjoint. 
From this, $\sum_{X:\mbox{level 1 node}} m_x=O(M)$. Therefore, the total time 
for Step is
\[
\sum_{X:\mbox{level 1 node}} O(m_x\log^3m_x)=O(M\log^3M).
\]
The time for other steps in Procedure Branch-Minor($G|U$) is $O(M)$. The number 
of recursive calls in which each vertex of $G|U$ is involved in the computation 
of Step 2 is $O(\frac{1}{\alpha})=O(1)$. Therefore, the running time of the 
algorithm is $O(n\log^3n)$.
\end{proof}

\section{$O(nk^2)$ time algorithm}
\label{sec-alg2}

To get an algorithm for Theorem~\ref{theo-3}, we follow the framework of
Procedure Branch-Minor in Section~\ref{sec-alg1} but use a different approach from
that in Steps 2(b)(c) to compute face separating cycles and a good separator for
${\cal Z}_X$ and $\overline{X}$. Our approach has the following major steps:
\begin{enumerate}
\item[(s1)] 
Given ${\cal Z}_X$ and $H_X$, for each $Z\in {\cal Z}_X$ the edges 
incident to face $f_Z$ in $H_X$ form a $(f_Z,f_{\overline X})$-separating cycle, 
denoted by $C_Z$ and called the {\em boundary cycle} of $Z$. Notice that the 
number of edges in $C_Z$, denoted by $|C_Z|$, is equal to $|\partial(Z)|$. 

For each $Z\in {\cal Z}_X$ with $|C_Z|\leq k$, we take $C_Z$ as a ``minimum''
$(f_Z,f_{\overline X})$-separating cycle and the cycle $C_Z$ induced subset $Z$
as a candidate for a noose induced edge subset $A_Z$ which separates $Z$ and
$\overline{X}$.

Notice that any two different boundary cycles share at most one common vertex, 
because otherwise it contradicts with that each $Z$ is a biconnected component.

\item[(s2)] 
Let ${\cal W}_X=\{Z\in {\cal Z}_X \mid |C_Z|>k$\}. 
We apply the techniques in \cite{KT12,KT13} to decompose $H_X$ into pieces 
(subgraphs), each piece contains face $f_Z$ for exactly one $Z\in {\cal W}_X$. 

\item[(s3)] 
For each piece containing one $f_Z$ with $Z\in\mathcal{W}_X$, we find a minimum
$(f_Z,f_{\overline{X}})$-separating cycle using the approaches in 
\cite{BSW13,Reif83}.

\item[(s4)] 
From the $(f_Z,f_{\overline X})$-separating cycles computed above, we
find non-crossing face separating cycles to get a a good separator for 
${\cal Z}_X$ and $\overline{X}$.
\end{enumerate}
The approach in \cite{Reif83} is a basic tool for Step (s3). The efficiency of the
tool can be improved by pre-computing some shortest distances between the vertices
in the vertex cut set separating the piece from the rest of the graph \cite{BSW13}. 
The pieces computed in Step (s2) have properties which allow us to use a scheme 
in \cite{KT12,KT13} to pre-compute some shortest distances (called $\upDDG$ and 
$\lowDDG$) for every piece to further improve the efficiency of the tool when it
is applied to the pieces. The separating cycles computed in Step (s3) may not be 
non-crossing because the unique shortest path assumption we used in 
Section~\ref{sec-alg1} does not hold in the scheme in \cite{KT12,KT13}. We develop 
a new technique to clear this hurdle. New ingredients in our approach also include: 
To find separating cycles, we use a simple method for $Z$ with $|C_Z|\leq k$ and 
use the complex techniques only for $Z$ with $|C_Z|>k$ instead of every 
$Z\in {\cal Z}_X$ as in \cite{KT12,KT13}. This reduces the time complexity by a 
$O(k)$ factor for finding the separating cycles. By the approaches of 
\cite{BSW13,Reif83} for finding the minimum face separating cycles, the scheme in 
\cite{KT12,KT13} for pre-computing $\upDDG$ and $\lowDDG$ and new developed 
technique to extract non-crossing separating cycles from the cycles computed in 
Steps (s1)-(s3), we find non-crossing separating cycles of length at most $k$ 
instead of $3k-1$ as in \cite{KT12,KT13}. 

\subsection{Review on previous techniques}
\label{sec-rev}

We now briefly review some notions and techniques introduced in \cite{KT12,KT13}. 
For a plane graph $G$, one face can be selected as the {\em outer-face}, denoted 
by $f_0$, and every face other than $f_0$ is called an {\em inner-face}. A plane 
graph is {\em almost triangulated} if every inner-face of the graph is incident 
to exactly three vertices and three edges. A plane graph is {\em $k$-outerplanar} 
if the normal distance from any vertex to $f_0$ is at most $k$. Let $\hat{G}$ be 
an almost triangulated graph. The height of vertex $u$ in $\hat{G}$ is 
$l_{\hat{G}}(u)=\nd_{\hat{G}}(V(f_0),u)$. The {\em height} of a
face $f$ of $\hat{G}$ is $l_{\hat{G}}(f)=\min_{u\in V(f)} l_{\hat{G}}(u)$.

A maximum connected set $Z$ of vertices of $\hat{G}$ is called a {\em crest} if
every vertex of $Z$ has the largest height in $\hat{G}$. For each $u$ with 
$l_{\hat{G}}(u)>0$, an arbitrary vertex $v$ adjacent to $u$ with 
$l_{\hat{G}}(v)<l_{\hat{G}}(u)$ (such a $v$ always exists) is selected as the 
{\em down vertex} of $u$ and the edge $\{u,v\}$ is called the {\em down edge} 
of $u$. When the down vertex of every vertex in $\hat{G}$ is selected, each 
vertex $u$ in $\hat{G}$ has a unique {\em down path} consisting of the selected 
down edges only.

For a path $L$ in $\hat{G}$, $d_{\hat{G}}(L)=\min_{u\in V(L)} l_{\hat{G}}(u)$ is
defined as the {\em depth} of $L$. A path $R$ between two crests $Z$ and $Z'$ is 
called a {\em ridge} between $Z$ and $Z'$ if $R$ has the maximum depth among all 
paths between $Z$ and $Z'$. A {\em crest separator} is a subgraph $S=L_1\cup L_2$ 
of $\hat{G}$, where $L_1$ is the unique down path of a vertex $u$ and $L_2$ is a 
path composed of the edge $\{u,u'\}$ and the unique down path of $u'$, $u'$ is not 
in $L_1$ and $l_{\hat{G}}(u')\leq l_{\hat{G}}(u)$. Vertices in $S$ of the largest 
height are called the {\em top vertices} and edge $\{u,u'\}$ is called the 
{\em top edge} of $S$. Note that each crest separator $S$ has $t\in\{1,2\}$ top 
vertices. The height $l_{\hat{G}}(S)$ of crest separator $S$ is the height of 
its top vertices. We say a crest separator $S=L_1\cup L_2$ is on a ridge $R$ if
a top vertex $u$ of $S$ is on $R$ and $l_{\hat{G}}(S)=d_{\hat{G}}(R)$. 
A crest separator $S=L_1\cup L_2$ is called {\em disjoint} if path $L_1$ 
and the down path of $u'$ do not have a common vertex, otherwise {\em converged}. 
For a converged crest separator $S$, the paths $L_1$ and $L_2$ have a common 
sub-path from a vertex other than $u$ to a vertex $w$ in $V(f_0)$. The vertex 
$v\neq u$ in the common sub-path with the largest height is called the 
{\em low-point} and the sub-path from $v$ to $w$ is called the 
{\em converged-path}, denoted by $\cp(S)$, of $S$. 

For $\hat{G}$ on the sphere $\Sigma$, let $\overline{f_0}$ be the region of
$\Sigma\setminus f_0$. A crest separator $S$ is a {\it crest separator for crests 
$Z$ and $Z'$} if (1) $S$ is on a ridge between $Z$ and $Z^\prime$ and (2) removing 
$S$ from $\overline{f_0}$ cuts $\overline{f_0}$ into two regions, one contains $Z$ 
and the other contains $Z'$. 
Given a set of $r-1$ crest separators $S_1,..,S_{r-1}$, removing $S_1,..,S_{r-1}$
from $\overline{f_0}$ cuts $\overline{f_0}$ into $r$ regions $R_1,..,R_r$. Let
$P_i, 1\leq i\leq r$, be the subgraph of $\hat{G}$ consisting of the edges of 
$\hat{G}$ in $R_i$ and the edges of every crest separator with its top edge 
incident to $R_i$. We call $P_i$ a piece ($P_i$ is called an extended component
in \cite{KT13,KT15}).
Let ${\cal W}$ be an arbitrary subset of $r$ crests $Z_1,..,Z_r$ in $\hat{G}$. 
It is known (implicitly in the proofs of Lemmas 6-8 of \cite{KT13} and
explicitly in Lemmas 3.6-3.9 of \cite{KT15}) that there is a set ${\cal S}$ 
of $r-1$ crest separators with the following properties:
\begin{itemize}
\item[(a)] The crest separators of ${\cal S}$ decompose $\hat{G}$ into pieces
$P_1,...,P_r$ such that each piece $P_i$ contains exactly one crest 
$Z_i\in {\cal W}$. Moreover, no crest separator in $\cal{S}$ contains a vertex of 
any $Z_i$ in $\cal{W}$.

\item[(b)] For each pair of pieces $P_i$ and $P_j$, there is a crest separator 
$S\in {\cal S}$ for $Z_i$ and $Z_j$ such that $S$ decomposes $\hat{G}$ into two 
pieces $P$ and $Q$, $P$ containing $Z_i$ and $Q$ containing $Z_j$. Moreover, $S$ 
has the minimum number of top vertices among the crest separators for $Z_i$ and 
$Z_j$.

\item[(c)]
Let $T_{\cal S}$ be the graph that $V(T_{\cal S})=\{P_1,...,P_r\}$ and there is 
an edge $\{P_i,P_j\}\in E(T_{\cal S})$ if there is a crest separator 
$S=E(P_i)\cap E(P_j)$ in ${\cal S}$. Then $T_{\cal S}$ is a tree.
\end{itemize}
The tuple $(\hat{G},{\cal S},{\cal W})$ is called a {\em good mountain structure 
tree} (GMST).
We call $T_{\cal S}$ the {\em underlying tree} of the GMST 
$(\hat{G},{\cal S},{\cal W})$.
For each edge $\{P_i,P_j\}$ in 
$T_{\cal S}$, the $S\in {\cal S}$ with $E(S)=E(P_i)\cap E(P_j)$ is called 
{\em the crest separator on edge $\{P_i,P_j\}$}. 
The following result is implied 
implicitly in \cite{KT13} and later stated in \cite{GX15} and \cite{KT15}.

\begin{lemma}\cite{KT13,KT15}
Given an arbitrary subset $\cal{W}$ of crests in an $O(k)$-outer planar $\hat{G}$, 
a GMST $(\hat{G},{\cal S},{\cal W})$ can be computed in $O(|V(\hat{G})|k)$ time.
\label{lem-gmst1}
\end{lemma}

Given a GMST $(\hat{G},{\cal S},{\cal W})$, we choose an arbitrary vertex $P_i$ in 
$T_{\cal S}$ as the root. Each crest separator $S\in {\cal S}$ decomposes 
$\hat{G}$ into two pieces, one contains $P_i$, called the {\em upper piece} by 
$S$, and the other does not, called the {\em lower piece} by $S$. A piece $P$ is 
{\em enclosed} by a converged crest separator $S$ if $P\setminus \cp(S)$ does 
not have any edge incident to $f_0$. For vertices $u$ and $v$ in a piece $P$, 
let $\dist_P(u,v)$ denote the length of a shortest path in $P$ between $u$ and 
$v$. For vertices $u$ and $v$ in a disjoint crest separator $S$, let 
$\dist_S(u,v)$ be the length of the path in $S$ between $u$ and $v$. For vertices
$u$ and $v$ in a converged crest separator $S$, let $\dist_S(u,v)$ be the length 
of the shortest path in $S$ between $u$ and $v$ if at least one of $u$ and $v$ is
in $\cp(S)$, otherwise let $\dist_S(u,v)$ be the length of the path in $S$ between 
$u$ and $v$ that does not contain the the low-point of $S$.

A disjoint crest separator $S$ decomposes $\hat{G}$ into two pieces $P$ and $Q$.
For $P$ (resp. $Q$), let $G_{SP}$ (resp. $G_{SQ}$) be the weighted graph on the 
vertices in $S$ such that for every pair of vertices $u$ and $v$ in $S$, if 
$\dist_P(u,v)<\dist_S(u,v)$ (resp. $\dist_Q(u,v)<\dist_S(u,v)$) then there is an 
edge $\{u,v\}$ with weight $\dist_P(u,v)$ in $G_{SP}$ (resp. with weight 
$\dist_Q(u,v)$ in $G_{SQ}$). If $P$ is the upper piece by $S$ then $G_{SP}$ is 
called the $\upDDG(S)$ and $G_{SQ}$ the $\lowDDG(S)$, otherwise $G_{SP}$ is called 
the $\lowDDG(S)$ and $G_{SQ}$ the $\upDDG(S)$.

A converged crest separator $S$ decomposes $\hat{G}$ into two pieces and exactly 
one piece $P$ is enclosed by $S$. For $P$, let $G_{SP}$ be defined as in the
previous paragraph. Let $Q$ be the other piece not enclosed by $S$. A plane graph 
$Q'$ can be created from $Q$ by cutting $Q$ along $\cp(S)$: create a duplicate 
$v'$ for each vertex $v$ in $\cp(S)$ and create a duplicate $e'$ for each edge 
$e$ in $\cp(S)$ (see Figure~\ref{fig-pcs}). Let $S'$ be the subgraph induced by 
the edges of $S$ and the duplicated edges. For every pair of vertices $u,v$ in 
$S'$, let $\dist_{S'}(u,v)$ be the length of the path in $S'$ between $u$ and $v$. 
For $Q'$, let $G_{SQ'}$ be the weighted graph on the vertices on $S'$ such that 
for every pair of vertices $u$ and $v$, if $\dist_{Q'}(u,v)<\dist_{S'}(u,v)$ then 
there is an edge $\{u,v\}$ with weight $\dist_{Q'}(u,v)$ in $G_{SQ'}$. If $P$ is 
the upper piece by $S$ then $G_{SP}$ is called the $\upDDG(S)$ and $G_{SQ'}$ the 
$\lowDDG(S)$, otherwise $G_{SP}$ is called the $\lowDDG(S)$ and $G_{SQ'}$ the 
$\upDDG(S)$.

\begin{figure}[t]
\centerline{\includegraphics[scale=0.9]{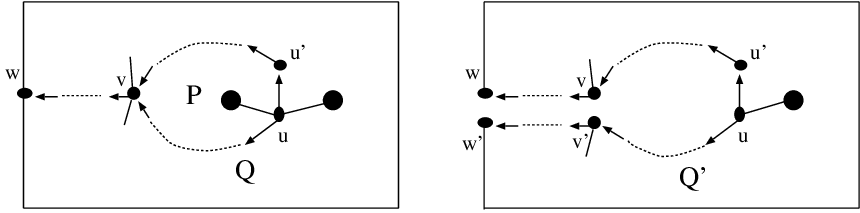}}
\caption{Graph $Q'$ obtained from cutting $Q$ along $\cp(S)$.} 
\label{fig-pcs}
\end{figure}

Each crest separator $S\in {\cal S}$ decomposes $\hat{G}$ into two pieces $P$ and 
$Q$ and we assume $P$ is the upper piece and $Q$ is the lower piece. For each edge 
$e=\{u,v\}$ in $\upDDG(S)$ (resp. $\lowDDG(S)$), the weight of $e$ is used to 
decide whether a minimum face separating cycle should use a shortest path between 
$u$ and $v$ in $P$ (resp. $Q$) or not, and if so, any shortest path shortest path 
between $u$ and $v$ in $P$ (resp. $Q$) can be used (there may be multiple shortest 
paths between $u$ and $v$ in $P$ (resp. $Q$)). So we say edge $\{u,v\}$ in 
$\upDDG(S)$ (resp. $\lowDDG(S)$) {\em represents} any shortest path between $u$ 
and $v$ in $P$ (resp. $Q$). The computation of $\upDDG(S)$ (resp.$\lowDDG(S)$) 
also includes computing one shortest path between $u$ and $v$ in $P$ (resp. $Q$) 
for every edge $\{u,v\}$ in $\upDDG(S)$ (resp. $\lowDDG(S)$).

Note that in this paper we use the terms $\upDDG$ and $\lowDDG$ instead of the 
{\em h-high pseudo shortcut set} in \cite{KT13} to give a more clear description.

The following properties of the down paths, GMST $(\hat{G},{\cal S},{\cal W})$, 
$\upDDG(S)$ and $\lowDDG(S)$ can be easily verified and are proved in \cite{KT13}: 
\begin{itemize}
\item[(I)] 
For any pair of vertices $u$ and $v$ in $\hat{G}$, 
$\dist_{\hat{G}}(u,v)\geq|l_{\hat{G}}(u)-l_{\hat{G}}(v)|$.
\item[(II)]For any pair of vertices $u$ and $v$ in a same down path of $S$, 
$\dist_{S}(u,v)=|l_{\hat{G}}(u)-l_{\hat{G}}(v)|=\dist_{\hat{G}}(u,v)$.
\item[(III)] If $\hat{G}$ is $O(k)$-outerplanar then for every $S\in {\cal S}$, 
there are $O(k)$ vertices and $O(k)$ edges in $S$ and every edge in $\upDDG(S)$ 
($\lowDDG(S))$ has weight $O(k)$.
\item[(IV)] If $\hat{G}$ is $O(k)$-outerplanar, 
$\sum_{P_i\in T_{\cal{S}}}|E(P_i)|=|E(\hat{G})|+O(|{\cal{S}}|k)$.
\item[(V)]
For every $S\in {\cal S}$ and each edge $e=\{u,v\}$ in $\upDDG(S)$/$\lowDDG(S)$, 
any shortest path represented by $e$ contains no vertex of height greater than 
$l_{\hat{G}}(S)$ and no more than $t-1$ vertices of height $l_{\hat{G}}(S)$, 
where $t$ is the number of top vertices of $S$. 
\item[(VI)]
Let $S$ be the crest separator on edge $\{P_i,P_j\}$ in $T_{\cal S}$. Assume that 
$Z_i$ and $Z_j$ are in the upper piece and lower piece by $S$, respectively. 
For every edge $e=\{u,v\}$ in $\upDDG(S)$ (resp. $\lowDDG(S)$), any shortest path 
represented by $e$ and the segment of $S$ between $u$ and $v$ that contains a top 
vertex of $S$ form a cycle which separates $Z_i$ (resp. $Z_j$) from $f_0$.
\end{itemize}
From Properties (I)-(VI), the $\upDDG(S)$ and $\lowDDG(S)$ can be computed 
as shown in the next lemma (Lemma 18 in \cite{KT13}).
\begin{lemma} \cite{KT13} 
Given a GMST $(\hat{G},{\cal S},{\cal W})$, $\upDDG(S)$ and $\lowDDG(S)$ for 
all $S\in {\cal S}$ can be computed in $O(|V(\hat{G})|k^3)$ time. 
\label{lem-ddg}
\end{lemma}

The authors of \cite{KT13} settle for this result because in their application, 
$|{\cal W}|=O(|V(\hat{G})|)$. However, it is hidden in the proof details and 
stated in \cite{GX15} that the time complexity is actually 
$O(|V(\hat{G})|k+|{\cal W}|k^3)$. It is also hidden in the details that the 
Lemma holds when $\hat{G}$ is weighted. We now state Lemma \ref{lem-ddg} as the 
following Lemma which will be used in this paper.
\begin{lemma} \cite{KT13,KT15}
Given a GMST $(\hat{G},{\cal S},{\cal W})$, $\upDDG(S)$ and $\lowDDG(S)$ for all 
$S\in {\cal S}$ can be computed in $O(|V(\hat{G})|k+|{\cal W}|k^3)$ time. 
\label{lem-ddg1}
\end{lemma}

\subsection{Algorithm for Theorem~\ref{theo-3}}
\label{sec-newalg}

For a level 1 node $X$ and the set ${\cal Z}_X$ of child nodes in the layer tree 
$\LT(G|U,\partial(U))$ in Procedure Branch-Minor, recall that $G_X$ is the plane
graph converted from plane hypergraph $(G|\overline{X})|{\cal Z}_X$ and $H_X$ is 
the weighted plane graph computed from $G_X$ as described in Section~\ref{sec-alg1}. 
Recall that ${\cal W}_X=\{Z\in {\cal Z}_X\mid |C_Z|>k\}$. We apply the techniques in 
\cite{KT12,KT13} to decompose $H_X$ into pieces, each piece contains face $f_Z$ 
of $H_X$ for exactly one $Z\in {\cal W}_X$. It may not be straightforward to 
decompose $H_X$ directly by the techniques of \cite{KT12,KT13} because some of 
the techniques are described for graphs while $H_X$ is weighted (edges have weight 
$1/2$ or 1). To get a decomposition of $H_X$ as required, we first construct an 
almost triangulated graph $\hat{G}_X$ from $G_X$ with each $Z\in {\cal Z}_X$ 
represented by a crest of $\hat{G}_X$; then by the techniques of \cite{KT12,KT13}
find a GMST of $\hat{G}_X$ which decomposes $\hat{G}_X$ into pieces, each piece
contains exactly one crest; next construct an almost triangulated weighted graph 
$\hat{H}_X$ from $H_X$ with each $Z\in {\cal Z}_X$ represented by a crest of 
$\hat{H}_X$; and finally compute a set of crest separators in $\hat{H}_X$ based 
on the GMST of $\hat{G}_X$ to decompose $\hat{H}_X$ into pieces such that each 
piece $\hat{H}_X$ contains exactly one crest $Z\in {\cal W}_X$ (and thus each 
piece of $H_X$ contains exactly one face $f_Z$).

We first describe the construction of $\hat{G}_X$. Let $f_{\overline{X}}$ be the
outer face of $G_X$. For every $Z\in {\cal Z}_X$, we add a vertex, also denoted 
by $Z$, and edges $\{u,Z\}$ for every $u\in V(f_Z)$ to face $f_Z$ in $G_X$. For 
every natural face $f$ of $G_X$ with $|V(f)|>3$, we select an arbitrary vertex 
$v$ of $V(f)$ with $l_{G_X}(v)=l_{G_X}(f)$ as the {\em low-point} of $f$, denoted
by $\lp(f)$, and we add edges $\{u,\lp(f)\}$ to face $f$ for every $u\in V(f)$ and 
not adjacent to $\lp(f)$. Let $\hat{G}_X$ be the graph obtained from adding the 
vertices and edges above. Let $f_{\overline{X}}$ be the outer face of $\hat{G}_X$. 
Then $\hat{G}_X$ is almost triangulated. 

$G_X$ is a subgraph of $\hat{G}_X$. For every $u\in V(G_X)\cap V(\hat{G}_X)$, 
$l_{G_X}(u)=l_{\hat{G}_X}(u)$, every vertex $Z$ added to face $f_Z$ of $G_X$ is 
a crest of $\hat{G}_X$ and every crest of $\hat{G}_X$ is a vertex $Z$ added to
$f_Z$. Recall that ${\cal W}_X=\{Z\in {\cal Z}_X \mid |C_Z|>k\}$ which is a subset 
of crests in $\hat{G}_X$. By Lemma~\ref{lem-gmst1}, we can find a GMST 
$(\hat{G}_X,{\cal S},{\cal W}_X)$. 

Next we describe how to construct $\hat{H}_X$. For every $Z\in {\cal Z}_X$, we add 
a vertex, also denoted by $Z$, and edges $\{u,Z\}$ for every $u\in V(f_Z)$ to face
$f_Z$ of $H_X$. We assign each edge $\{u,Z\}$  weight $1$. Let $\hat{H}_X$ be the
graph computed above and $f_{\overline{X}}$ be the outer face of $\hat{H}_X$. Then 
$\hat{H}_X$ is almost triangulated. Notice that 
$V(G_X)\subseteq V(\hat{G}_X)\subseteq V(\hat{H}_X)$, 
$V(H_X)\subseteq V(\hat{H}_X)$, $E(G_X)\subseteq E(\hat{G}_X)$, 
$E(G_X)\subseteq E(H_X)\subseteq E(\hat{H}_X)$ and $|V(\hat{H}_X)|=O(|V(G_X)|)$. 
We define the height $l_{\hat{H}_X}(u)$ of each vertex $u$ of $\hat{H}_X$ as 
follows:
\begin{itemize}
\item $l_{\hat{H}_X}(u)=l_{G_X}(u)$ if $u\in V(\hat{H}_X)\cap V(G_X)$. 
\item $l_{\hat{H}_X}(u)=l_{G_X}(f)+1/2$ if $u=u_f$ is the vertex added to a 
natural face $f$ of $G_X$. 
\item  $l_{\hat{H}_X}(u)=l_{G_X}(f_Z)+1$ if $u=Z$ is the vertex added to $f_Z$. 
\end{itemize}
Then each vertex $Z$ is a crest of $\hat{H}_X$ and each crest of $\hat{H}_X$ is 
a vertex $Z$. The heights of a face and a crest separator, and the depth of
a ridge in $\hat{H}_X$ are defined based on $l_{\hat{H}_X}(u)$ similarly as those 
in Section~\ref{sec-rev}.

Similar to the down vertex and down edge in $\hat{G}_X$, we define the down vertex
and down edge for each vertex of $\hat{H}_X$. Recall that any vertex $v$ adjacent
to vertex $u$ with $l_{\hat{H}_X}(v)<l_{\hat{H}_X}(u)$ can be selected as the down 
vertex of $u$. We choose the down vertex for each $u$ of $\hat{H}_X$ as follows:
\begin{itemize}
\item if $u\in V(\hat{G}_X)$ and the down edge $\{u,v\}$ in $\hat{G}_X$ is an
edge of $G_X$ or an edge added to face $f_Z$ ($u$ is a crest) then $v$ is the 
down vertex of $u$ in $\hat{H}_X$;
\item if $u\in V(\hat{G}_X)$ and the down edge $\{u,v\}$ in $\hat{G}_X$ is an
edge added to a natural face $f$ of $G_X$ then the vertex $u_f$ added to $f$ in 
$\hat{H}_X$ is the down vertex of $u$; and
\item otherwise, $u$ is not in $\hat{G}_X$ and is the vertex $u_f$ added to a 
natural face $f$ of $G_X$ in $\hat{H}_X$; then the low-point $\lp(f)$ is the down 
vertex of $u$.
\end{itemize}
The edge between vertex $u$ and its down vertex is the down edge of $u$.

Given a GMST $(\hat{G}_X,{\cal S},{\cal W}_X)$, for every crest separator 
$S\in {\cal S}$ and every edge $e$ of $S$, either $e\in E(G_X)\cap E(\hat{G}_X)$ 
or $e$ is an edge added to a natural face $f$ of $G_X$ during the construction of 
$\hat{G}_X$. We convert each $S\in {\cal S}$ into a subgraph $D$ of $\hat{H}_X$: 
for every edge $e=\{u,v\}$ in $S$, $\{u,v\}$ of $\hat{H}_X$ is included in $D$ if 
$e\in E(G_X)\cap E(\hat{G}_X)$, otherwise edges $\{u,u_f\},\{u_f,v\}$ of 
$\hat{H}_X$ are included in $D$, where $u_f$ is the vertex added to face $f$ of 
$G_X$ when $H_X$ is created from $G_X$. 

A crest separator $S\in {\cal S}$ for crests $Z$ and $Z'$ consists  of two paths 
$L_1$ and $L_2$ ($L_1$ is the down path of some vertex $u$ and $L_2$ is composed 
of the top edge $\{u,u'\}$ and the down path of $u'$, where $u'$ is not in $L_1$ 
and $l_{\hat{G}_X}(u')\leq l_{\hat{G}_X}(u)$) and decomposes $\hat{G}_X$ into two 
pieces, one contains $Z$ and the other contains $Z'$. From the way we define the 
down vertex of every vertex $u$ in $\hat{H}_X$, the subgraph $D$ converted from 
$S$ is also a crest separator consisting of two paths $L'_1$ and $L'_2$ in 
$\hat{H}_X$ described below:
\begin{itemize}
\item[(1)] if the top edge $\{u,u'\}$ of $S$ is an edge of $G_X$ then $L'_1$ is the 
down path from vertex $u$ and $L'_2$ is composed of the top edge $\{u,u'\}$ and 
the down path from $u'$; 

\item[(2)] if $\{u,u'\}$ is an added edge to a natural face $f$ of $G_X$ when 
constructing $\hat{G}_X$ and $l_{\hat{G}_X}(u)=l_{\hat{G}_X}(u')+1$ then $L'_1$ is 
the down path from vertex $u$ and $L'_2$ is composed of the top edge $\{u,u_f\}$ 
and the down path from $u_f$, where $u_f$ is the vertex added to $f$ when 
constructing $H_X$ and $l_{\hat{H}_X}(u_f)=l_{\hat{H}_X}(u)-1/2$;
\item[(3)] otherwise ($\{u,u'\}$ is an added edge to $f$ and 
$l_{\hat{G}_X}(u)=l_{\hat{G}_X}(u')$), $L'_1$ is the down path from vertex $u_f$ 
(added to $f$ when constructing $H_X$) and $L'_2$ is composed of the top edge 
$\{u_f,\lp(f)\}$ and the down path from $\lp(f)$, where $\lp(f)=u$ or $\lp(f)=u'$.
\end{itemize}
In Case (1) and Case (2), $D$ is a crest separator for $Z$ and $Z'$. In Case (3), 
either $D$ intersects a ridge $R$ between $Z$ and $Z'$ but 
$l_{\hat{H}_X}(u_f)=d_{\hat{H}_X}(R)+1/2=l_{\hat{H}_X}(u)+1/2$ or $D$ is on a 
ridge $R$ between $Z$ and $Z'$ and 
$l_{\hat{H}_X}(D)=l_{\hat{H}_X}(u_f)=d_{\hat{H}_X}(R)$. In all cases, $D$ is a 
crest separator which decomposes $\hat{H}_X$ into two pieces $P$ and $Q$, $P$ 
contains $Z$ and $Q$ contains $Z'$.

Let ${\cal D}=\{D|D \mbox{ is converted from $S$ for every $S\in {\cal S}$}\}$ and 
assume that ${\cal W}_X$ has $r$ crests $Z_1,..,Z_r$. It is easy to see that
${\cal D}$ has the following properties:
\begin{itemize}
\item[(A)] The crest separators of ${\cal D}$ decompose $\hat{H}_X$ into $r$ 
pieces $P_1,...,P_r$ such that each piece $P_i$ has exactly one crest $Z_i$.
Moreover, no crest separator $D$ in ${\cal D}$ contains a crest in ${\cal W}_X$, 
that is, each piece contains the edges in $E(f_Z)$ of $H_X$ for exactly one 
$Z\in {\cal W}_X$.
\item[(B)] For each pair of pieces $P_i$ and $P_j$, there is a crest separator 
$D\in {\cal D}$ such that $D$ decomposes $\hat{H}_X$ into two pieces $P$ and $Q$, 
$P$ contains $Z_i$ and $Q$ contains $Z_j$. 

Let ${\cal D}_{ij}$ be the set of crest separators for $Z_i$ and $Z_j$ in 
$\hat{H}_X$. Recall that every crest separator for $Z_i$ and $Z_j$ is on a ridge 
between $Z_i$ and $Z_j$ and thus all crest separators in ${\cal D}_{ij}$ have the 
same height, denoted as $l_{\hat{H}_X}({\cal D}_{ij})$. For $D$ in Case (1), Case
(2) and Case (3) with 
$l_{\hat{H}_X}(u_f)=d_{\hat{H}_X}(R)=l_{\hat{H}_X}({\cal D}_{ij})$, from Property
(B), we have

(B1) $D$ is in ${\cal D}_{ij}$ and has the minimum number of top vertices among 
all crest separators in ${\cal D}_{ij}$.

For $D$ in Case (3) with $l_{\hat{H}_X}(u_f)=d_{\hat{H}_X}(R)+1/2$,

(B2) $D$ is not in ${\cal D}_{ij}$, 
$l_{\hat{H}_X}(D)=l_{\hat{H}_X}({\cal D}_{ij})+1/2$ and every crest separator in
${\cal D}_{ij}$ has two top vertices.
\item[(C)]
Let $T_{\cal D}$ be the graph that $V(T_{\cal D})=\{P_1,...,P_r\}$ and 
there is an edge $\{P_i,P_j\}\in E(T_{\cal D})$ if there is a crest separator
$D=E(P_i)\cap E(P_j)$ in ${\cal D}$. Then $T_{\cal D}$ is a tree.
\end{itemize}
The tuple $(\hat{H}_X,{\cal D},{\cal W}_X)$ is called a {\em pseudo good mountain 
structure tree (pseudo GMST)} for ${\cal W}_X$. For each $D\in {\cal D}$ and
vertices $u,v$ of $D$, $\dist_D(u,v)$, $\upDDG(D)$ and $\lowDDG(D)$ are defined 
similarly as $\dist_S(u,v)$, $\upDDG(S)$ and $\lowDDG(S)$ for crest separator $S$ 
and $u,v$ of $S$ in Section~\ref{sec-rev}. 
As shown in the next lemma, Properties (I)-(VI) in Section~\ref{sec-rev} hold 
for a pseudo GMST, $\upDDG(D)$ and $\lowDDG(D)$. 
\begin{lemma}
Properties (I)-(VI) hold for a pseudo GMST $(\hat{H}_X,{\cal D},{\cal W}_X)$, 
$\upDDG(D)$ and $\lowDDG(D)$, $D\in {\cal D}$.
\label{lem-gmst2}
\end{lemma}
\begin{proof}
Property (I) and Property (II) hold trivially from the definition of 
$l_{\hat{H}_X}(u)$ for vertex $u$ and the definition of down path.

If $\hat{H}_X$ is $O(k)$-outerplanar, then the height of $D$ is $O(k)$ and there 
are $O(k)$ vertices and $O(k)$ edges in $D$ because the weight of each edge in 
$\hat{H}_X$ is $1$ or $1/2$. From Property (II), every edge in $\upDDG(D)$ 
($\lowDDG(D))$ has weight $O(k)$ because $\dist_{D}(u,v)=O(k)$ for any $u$ and 
$v$ in $D$. Thus, Property (III) holds.

Every $D\in {\cal D}$ is converted from an $S\in {\cal S}$ and $|E(D)|=O(|E(S)|)$.
From this, and $|{\cal D}|=|{\cal S}|$, 
$\sum_{P_i\in T_{\cal{D}}}|E(P_i)|=|E(\hat{H}_X)|+O(|{\cal{D}}|k)$.
Thus, Property (IV) holds.

From Property (II), for each edge $e=\{u,v\}$ in $\upDDG(S)$/$\lowDDG(S)$, $u$ and 
$v$ must be in different down paths of $D$ and $\dist_{D}(u,v)=2l_{\hat{H}_X}(D)-
l_{\hat{H}_X}(u)-l_{\hat{H}_X}(v)+(t-1)$, where $t\in\{1,2\}$ is the number of top 
vertices of $D$. Any vertex $w$ of height greater than $l_{\hat{H}_X}(D)$ in 
$\hat{H}_X$ has height at least $l_{\hat{H}_X}(D)+1/2$. From Property (I), any path 
between $u$ and $v$ that contains $w$ has length at least 
$2l_{\hat{H}_X}(D)+1-l_{\hat{H}_X}(u)-l_{\hat{H}_X}(v)\geq \dist_D(u,v)$. 
Similarly, any path between $u$ and $v$ that contains $t$ vertices of height 
$l_{\hat{H}_X}(D)$ has length at least $\dist_D(u,v)$. Therefore, any path 
represented by $e$ contains no vertex of height greater than $l_{\hat{H}_X}(D)$ 
and no more than $t-1$ vertices of height $l_{\hat{H}_X}(D)$ since its length is 
strictly smaller than $\dist_D(u,v)$. Thus, Property (V) holds.

We prove Property (VI) by contradiction. Let $D\in {\cal D}$ be the crest 
separator on edge $\{P_i,P_j\}$ in $T_{\cal D}$. We assume that $Z_i$ and $Z_j$ 
are in the upper piece and lower piece by $D$ respectively. For each edge
$e=\{u,v\}$ in $\upDDG(D)$ (resp. $\lowDDG(D)$), let $P_e$ be any shortest path
in the upper piece (resp. lower piece) represented by $e$ and let $D(u,v)$ be
the segment of $D$ between $u$ and $v$ and containing a top vertex of $D$. Then
$P_e$ and $D(u,v)$ form a cycle in $\hat{H}_X$. Assume for contradiction that the 
cycle formed by $P_e$ and $D(u,v)$ does not separate $Z_i$ (resp. $Z_j$) from 
$f_{\overline{X}}$. Let ${\cal D}_{ij}$ be the set of crest separators for $Z_i$
and $Z_j$ as defined in Property (B). Let $D'$ be the subgraph of $\hat{H}_X$ 
obtained by replacing $D(u,v)$ with $P_e$ in $D$. Then $D'$ separates $Z_i$ from 
$Z_j$ and thus, intersects with every ridge between $Z_i$ and $Z_j$. Since each 
ridge between $Z_i$ and $Z_j$ has depth $l_{\hat{H}_X}({\cal D}_{ij})$, $D'$ 
contains at least one vertex $w$ with 
$l_{\hat{H}_X}(w)=l_{\hat{H}_X}({\cal D}_{ij})$. Notice that no vertex of 
$D'\setminus P_e$ has height $l_{\hat{H}_X}({\cal D}_{ij})$. So $P_e$ contains 
$w$. Consider Case (B1) in Property (B) where $D\in {\cal D}_{ij}$ and $D$ has the 
minimum number of top vertices among all crest separators in ${\cal D}_{ij}$.
If $D$ has exactly one top vertex, from Property (V), $P_e$ contains no vertex of 
height $l_{\hat{H}_X}({\cal D}_{ij})$, a contradiction. If $D$ has two top vertices 
then every crest separator in ${\cal D}_{ij}$ (for $Z$ and $Z'$) has two top 
vertices and $w$ is the only vertex of height $l_{\hat{H}_X}({\cal D}_{ij})$ in 
$P_e$. However, this means that there is a crest separator for $Z_i$ and $Z_j$ 
with exactly one top vertex $w$, a contradiction. Consider Case (B2) where 
$l_{\hat{H}_X}(D)=l_{\hat{H}_X}({\cal D}_{ij})+1/2$, and every crest separator in
${\cal D}_{ij}$ has two top vertices. As proved, if $P_e$ has exactly one vertex 
of height $l_{\hat{H}_X}({\cal D}_{ij})$, we have a contradiction. If $P_e$ has two 
vertices of height greater than or equal to $l_{\hat{H}_X}({\cal D}_{ij})$, the 
length of $P_e$ is not strictly smaller than the length of $\dist_{D}(u,v)$, a 
contradiction. This gives Property (VI).
\end{proof}

Given a GMST $(\hat{G}_X,{\cal S},{\cal W}_X)$, a pseudo GMST 
$(\hat{H}_X,{\cal D},{\cal W}_X)$ can be computed in $O(|{\cal S}|k)$ time.
Lemma~\ref{lem-ddg} and Lemma~\ref{lem-ddg1} hold for a pseudo GMST because the 
lemmas only rely on Properties (I)-(VI). The computation of $\upDDG(D)$ and 
$\lowDDG(D)$ in Lemma~\ref{lem-ddg} and Lemma~\ref{lem-ddg1} 
(implicitly) uses a linear time algorithm by Thorup \cite{T99} for the single 
shortest path problem in graphs with integer edge weight as a subroutine. This 
requires that the edge weight in $\hat{H}_X$, $\upDDG(D)$ and $\lowDDG(D)$ can be 
expressed in $O(1)$ words in integer form. So the perturbation technique used in 
Section~\ref{sec-alg1} cannot be used in the computation above. In the algorithm 
for Theorem~\ref{theo-3}, we do not assume the uniqueness of shortest paths when 
we compute minimum separating cycles. A minimum $(f_Z,f_{\overline X})$-separating 
cycle $C$ cuts $\Sigma$ into two regions, one contains $f_Z$ and the other 
contains $f_{\overline X}$. Let $\ins(C)$ be the region containing $f_Z$. We say 
two cycles $C$ and $C'$ {\em cross} with each other if 
$\ins(C)\cap \ins(C')\neq \emptyset$, $\ins(C)\setminus \ins(C')\neq \emptyset$ 
and $\ins(C')\setminus \ins(C)\neq \emptyset$.
We say a set of at least two cycles are {\em crossing} if for any cycle $C$ in 
this set, there is at least one other cycle $C'$ that crosses with $C$. In 
Section~\ref{sec-alg1}, we do not have crossing minimum separating cycles because 
of the uniqueness of shortest paths. However, two minimum separating cycles 
computed without the assumption of unique shortest paths may cross with each other. 
We eliminate each crossing cycle set by exploiting some new properties of pseudo
GMST, $\upDDG(D)$ and $\lowDDG(D)$, and then compute a good separator for 
${\cal A}_X$.

Recall that $\hat{H}_X$ is constructed from $H_X$ by adding vertex (crest) $Z$ 
and edges $\{u,Z\}$, $u\in V(f_Z)$, to face $f_Z$ in $H_X$. Each piece $P$ of a 
pseudo GMST $(\hat{H}_X,{\cal D},{\cal W}_X)$ contains exactly one crest 
$Z\in {\cal W}_X$. By Property (A), removing $Z$ and the edges incident to $Z$ 
from piece $P$ gives a piece of $H_X$ containing face $f_Z$ for exactly one 
$Z\in {\cal W}_X$. From this and further Property (A) and Property (V), we will 
show in the proof of Lemma~\ref{lem-all-cycle} that a 
minimum $(f_Z,f_{\overline{X}})$-separating cycle can be computed based on
$(\hat{H}_X,{\cal D},{\cal W}_X)$, $\upDDG(D)/\lowDDG(D)$, $D\in {\cal D}$, and
the approaches of \cite{BSW13,Reif83}.
 
Given a set ${\cal W}_X$ of crests in $\hat{H}_X$, our algorithm for 
Theorem~\ref{theo-3} computes a pseudo GMST $(\hat{H}_X,{\cal D},{\cal W}_X)$, 
calculates $\upDDG(D)$ and $\lowDDG(D)$ for every crest separator $D\in {\cal D}$, 
and finds a minimum $(f_Z,f_{\overline X})$-separating cycle for every 
$Z\in {\cal W}_X$ using the pseudo GMST, $\upDDG(D)$ and $\lowDDG(D)$. 
More specifically, we replace Steps 2(b)(c) in Procedure Branch-Minor with the 
following subroutine to get an algorithm for Theorem~\ref{theo-3}.\\
{\bf Subroutine} Crest-Separator\\
{\bf Input:} Hypergraph $(G|\overline{X})|{\cal Z}_X$, weighted graph $H_X$ and 
integer $k$.\\
{\bf Output:} Good separator $\AAA_X$ for ${\cal Z}_X$ and $\overline{X}$
or a $(k+1)\times \ceil{\frac{k+1}{2}}$ cylinder minor.\\
\begin{enumerate}
\item[(1)] For every $Z\in {\cal Z}_X$ with $|C_Z|\leq k$, take $C_Z$ as
a $(f_Z,f_{\overline X})$-separating cycle and mark $Z$ as separated. 
\item[(2)] 
Compute $\hat{G}_X$ and $\hat{H}_X$. 
Let ${\cal W}_X=\{Z\in {\cal Z}_X \mid |C_Z|>k\}$.

\item[(3)] Compute a GMST $(\hat{G}_X,{\cal S},{\cal W}_X)$ by 
Lemma~\ref{lem-gmst1} and a pseudo GMST $(\hat{H}_X,{\cal D},{\cal W}_X)$ from
$(\hat{G}_X,{\cal S},{\cal W}_X)$.

\item[(4)] Compute $\upDDG(D)$ and $\lowDDG(D)$ for every crest separator
$D\in {\cal D}$ by Lemma~\ref{lem-ddg1}.

\item[(5)] Mark every $Z\in {\cal W}_X$ as un-separated. Repeat the following 
until every $Z$ is marked as separated.

\begin{enumerate}
\item[(5.1)] 
Choose an arbitrary un-separated $Z$, compute a 
minimum $(f_Z,f_{\overline X})$-separating cycle $C$ using the pseudo GMST
$(\hat{H}_X,{\cal D},{\cal W})$, $\upDDG(D)$ and $\lowDDG(D)$. We call $C$ the
{\em cycle computed for } $Z$.

\item[(5.2)] If the length of $C$ is greater than $k$ then compute a
$(k+1)\times \ceil{\frac{k+1}{2}}$ cylinder minor by Lemma~\ref{lem:minor}
and terminate. Otherwise, take this cycle as the minimum
$(f_Z,f_{\overline X})$-separating cycle for every $Z\in \ins(C)$ and mark every
$Z$ in $\ins(C)$ separated.
\end{enumerate}

\item[(6)] Compute $\AAA_X$ from the (minimum) face separating cycles obtained
in Step (1) and Step (5).
\end{enumerate}

We now analyze the time complexity of Subroutine Crest-Separator.

\begin{lemma}
Steps (1)-(5) of Subroutine Crest-Separator can be computed in $O(|E(H_X)|k^2)$ 
time.
\label{lem-all-cycle}
\end{lemma}
\begin{proof}
Let $m$ be the numbers of edges in $H_X$. Then $|E(\hat{H}_X)|=O(m)$. Notice that 
each edge of $\hat{H}_X$ appears in at most one boundary cycle $C_Z$ because any
pair of boundary cycles do not have more than one common vertex. From this, 
$\sum_{Z\in {\mathcal{Z}_X}}|C_Z|=O(m)$. Therefore, for all 
$Z\in {\cal Z}_X\setminus{\cal W}_X$, $(f_Z,f_{\overline X})$-separating cycles 
can be found in $O(m)$ time. For each crest $Z\in {\cal W}_X$, $C_Z$ has more 
than $k$ edges in $H_X$. Therefore, $|{\cal W}_X|=O(m/k)$. A pseudo GMST 
$(\hat{H}_X,{\cal D},{\cal W}_X)$, $\upDDG(D)$ and $\lowDDG(D)$ for all 
$D\in\cal{D}$ can be computed in $O(mk+|{\cal W}_X|k^3)=O(mk^2)$ time 
(Lemmas \ref{lem-gmst1} and \ref{lem-ddg1}).

For an un-separated crest $Z\in {\cal W}_X$, let $P$ be the piece in the pseudo
GMST $(\hat{H}_X,{\cal D},{\cal W}_X)$ containing $Z$ 
(see Figure~\ref{fig-findcycle} (a)). By Property (A), removing $Z$ and the edges 
incident to $Z$ from $P$ gives a piece $P'$ of $H_X$ containing face $f_Z$ of 
$H_X$ (see Figure~\ref{fig-findcycle} (b)). Let $x$ be an arbitrary vertex in $P$ 
incident to $Z$ and $L$ be the down path from $x$ to a vertex 
$w\in V(f_{\overline{X}})$. Then $L$ is a shortest path between $x$ and $w$ in 
$H_X$ and can be found in $O(k)$ time. Let $P(L)$ be the weighted plane graph 
obtained from $P'$ by cutting along $L$: for each vertex $u$ in $L$ create a 
duplicate $u'$, for each edge $e$ in $L$ create a duplicate $e'$ and create a 
new face bounded by edges of $E(f_{\overline{X}})$, $E(f_Z)$, $L$ and their 
duplicates (see Figure~\ref{fig-findcycle} (c)). Let $H_X(L)$ be the weighted 
graph obtained from $H_X$ by replacing $P'$ with $P(L)$. For each vertex $u$ in 
path $L$, let $C_u$ be a shortest path between $u$ and its duplicate $u'$ in 
$H_X(L)$. Let $y$ be a vertex in $L$ such that $C_y$ has the minimum length among 
the paths $C_u$ for all vertices $u$ in $L$. Then $C_y$ gives a minimum 
$(f_Z,f_{\overline X})$-separating cycle in $H_X$ \cite{Reif83}. 

\begin{figure}[t]
\centerline{\includegraphics[scale=0.9]{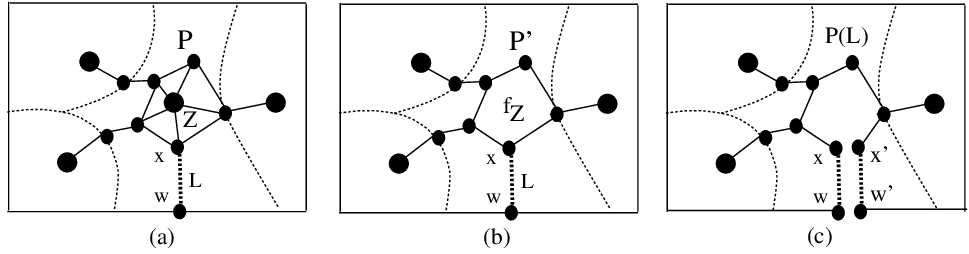}}
\caption{(a) Piece $P$ of $\hat{H}_X$, (b) piece $P'$ of $H_X$ and (c) graph 
$P(L)$.}
\label{fig-findcycle}
\end{figure}

For each piece $P$ in pseudo GMST $(\hat{H}_X,{\cal D},{\cal W}_X)$, let 
$\tilde{D}$ be the crest separator on the edge between $P$ and its parent node and 
let ${\cal D}_P$ be the set of crest separators on an edge between $P$ and a child 
node of $P$ in $T_{\cal D}$. From Property (A) and Property (V), any shortest path 
represented by an edge in $\upDDG(D)$ or $\lowDDG(D)$, $D\in {\cal D}$, does not 
contain any vertex of $V(\hat{H}_X)\setminus V(H_X)$. Therefore, $C_u$ for every 
$u$ in $L$ can be partitioned into subpaths such that each subpath is either 
entirely in $P(L)$ or is represented by an edge in $\upDDG(\tilde{D})$ 
or $\lowDDG(D)$, $D\in {\cal D}_P$. Let $P^*$ be the weighted graph consisting of 
the edges of $P(L)$, $\upDDG(\tilde{D})$ and $\lowDDG(D)$, $D\in {\cal D}_P$. 
Notice that for every edge $e$ in $\upDDG(\tilde{D})$ and $\lowDDG(D)$, 
$D\in {\cal D}_P$, a shortest path represented by $e$ is also computed. Then 
it is known (Section 2.2 in \cite{BSW13}) that a $C_y$ can be computed in
$O(t(P^*)\log |V(L)|+|V(C_y)|)=O(t(P^*)\log k+|V(C_y)|)$ time, where $t(P^*)$ is 
the time to find a shortest path $C_u$ in $P^*$ for any $u$ in $L$. For each edge 
of $P^*$, we can multiply the edge weight by 2 to make each edge weight a positive 
integer. By the algorithm in \cite{T99}, a shortest path $C_u$ can be computed in 
linear time, that is, $t(P^*)=O(|E(P^*)|)$. Therefore, from the fact that each 
crest separator has $O(k)$ vertices (Property III), it takes 
\begin{eqnarray*}
& & O((|E(P(L))|+|E(\upDDG(\tilde{D}))|+|\cup_{D\in \cal{D}_P} 
E(\lowDDG(D))|)\log k+|V(C_y)|)\\
& = &
O((|E(P)|+\Delta(P)k^2)\log k+|V(C_y)|)
\end{eqnarray*}
time to compute a minimum $(f_Z,f_{\overline X})$-separating cycle, where 
$\Delta(P)$ is the number of edges incident to $P$ in $T_{\cal D}$. 
If $C_y$ has length at most $k$ then $|V(C_y)|=O(k)$, otherwise $|V(C_y)|=O(m)$. 
Assume that $C_y$ for every $Z\in {\cal W}_X$ has length at most $k$. From 
Property (C), Property (IV) and $|V(T_{\cal D})|=|{\cal W}_X|=O(m/k)$, the time 
for computing the minimum $(f_Z,f_{\overline X})$-separating cycles for all crests
$Z\in {\cal W}_X$ is
\begin{eqnarray*}
\sum_{P\in V(T_{\cal D})} O((|E(P)|+\Delta(P)k^2)\log k+k) &=&
O((m+|{\cal W}_X|k+|{\cal W}_X|k^2)\log k+|{\cal W}_X|k)\\
&=&O(mk\log k).
\end{eqnarray*}
If there is a $C_y$ with length greater than $k$, it takes 
\[
O((|E(P)|+\Delta(P)k^2)\log k+m)=O(mk\log k)
\]
time to compute this $C_y$. After the first $C_y$ with length greater than $k$
is computed, the subroutine computes a cylinder minor in $O(m)$ time and
terminates. 

Summing up, the total time for Steps (1)-(5) is $O(|E(H_X)|k^2)$.
\end{proof}

Any two boundary cycles do not cross with each other because they do not share 
any common edge, each boundary cycle is the boundary of a face of $G_X$ and the
faces are disjoint. The height of any vertex in a boundary cycle is no smaller 
than the height of any $D\in {\cal D}$. Therefore from Property (B) and Property 
(V), a boundary cycle does not cross with a minimum separating cycle for a crest 
$Z$. However, a minimum separating cycle for one crest may cross with a minimum 
separating cycle for another crest. The next lemma gives a base for eliminating 
crossing separating cycles.

\begin {lemma}
Let $C_1$ be the minimum $(f_Z,f_{\overline X})$-separating cycle computed for 
crest $Z$ in Step (5). Let $C_2$ be the minimum 
$(f_{Z'},f_{\overline X})$-separating cycle computed for crest $Z'$ after $C_1$ 
in step (5). If $C_1$ and $C_2$ cross with each other, then there is a cycle $C$ 
such that $\ins(C)=\ins(C_1)\cup \ins(C_2)$ and the length of $C$ is the same as 
that of $C_2$.
\label{cycle-decomposition}
\end{lemma}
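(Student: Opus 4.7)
The plan is a planar uncrossing argument combined with the minimality of $C_1$ and $C_2$. Decompose each cycle at its crossings with the other: write $C_i = \gamma_i \cup \delta_i$, where $\gamma_i$ is the portion of $C_i$ lying in $\cl(\ins(C_{3-i}))$ and $\delta_i$ the portion lying outside $\ins(C_{3-i})$. Then the closed curve $C := \delta_1 \cup \delta_2$ bounds $\ins(C_1)\cup\ins(C_2)$, while $C' := \gamma_1 \cup \gamma_2$ bounds $\ins(C_1)\cap\ins(C_2)$; since every edge of $C_1\cup C_2$ appears in exactly one of $C$ and $C'$, the weighted lengths satisfy the additivity relation $\ell(C)+\ell(C')=\ell(C_1)+\ell(C_2)$, where $\ell(\cdot)$ denotes the sum of edge-weights along a cycle.

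Next I would extract the separation properties. Since $f_Z\in\ins(C_1)\subseteq\ins(C)$ and $f_{Z'}\in\ins(C_2)\subseteq\ins(C)$ while $f_{\overline{X}}$ sits outside both cycles, $C$ is both an $(f_Z,f_{\overline{X}})$- and an $(f_{Z'},f_{\overline{X}})$-separating cycle; minimality of $C_2$ gives $\ell(C)\geq\ell(C_2)$. For the reverse inequality I would argue $f_Z\in\ins(C_2)$ as well, so that $C'$ is also an $(f_Z,f_{\overline{X}})$-separating cycle and minimality of $C_1$ yields $\ell(C')\geq\ell(C_1)$; substituting into the additivity relation gives $\ell(C)\leq\ell(C_2)$, hence the required equality $\ell(C)=\ell(C_2)$, with $\ins(C)=\ins(C_1)\cup\ins(C_2)$ by construction.

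The main obstacle is establishing $f_Z\in\ins(C_2)$, since a priori one could have $f_Z\in\ins(C_1)\setminus\ins(C_2)$ while $\ins(C_1)\cap\ins(C_2)$ is nonempty for reasons unrelated to $f_Z$. I would handle this by appealing to how $C_2$ is produced algorithmically: $C_2$ arises as a shortest cycle through some vertex of a down-path from $Z'$ to $V(f_{\overline{X}})$, computed inside the basic piece containing $Z'$ by adjoining $\upDDG$ and $\lowDDG$ of the bordering crest separators. Properties (III)--(IV) of these graphs restrict how the shortest detours their edges represent can route relative to ridges, so any arc of $C_2$ that penetrates $\ins(C_1)$ is forced to encircle $f_Z$. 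A purely combinatorial fallback is to consider the swapped cycles $\delta_1\cup\gamma_2$ and $\gamma_1\cup\delta_2$, which still separate $f_Z$ and $f_{Z'}$ respectively; combining their minimality with the additivity pins $\ell(\gamma_1)=\ell(\gamma_2)$, after which a case analysis on which connected component of the complement of $C_1\cup C_2$ contains $f_Z$ closes the length equality.
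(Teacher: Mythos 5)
You have correctly located the crux --- showing $f_Z\in\ins(C_2)$ --- but neither of your two routes actually establishes it, and without it the lemma does not follow. Your ``combinatorial fallback'' provably falls short: in the bad case $f_Z\in\ins(C_1)\setminus\ins(C_2)$, the swapped curve $\delta_1\cup\gamma_2$ bounds $\ins(C_1)\setminus\ins(C_2)$ and hence separates $f_Z$, so minimality of $C_1$ gives $\ell(\delta_1)+\ell(\gamma_2)\geq\ell(\delta_1)+\ell(\gamma_1)$, i.e.\ $\ell(\gamma_2)\geq\ell(\gamma_1)$; the symmetric argument with $\gamma_1\cup\delta_2$ (which separates $f_{Z'}$, since $Z'\notin W(C_1)$) gives $\ell(\gamma_1)\geq\ell(\gamma_2)$. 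This pins $\ell(\gamma_1)=\ell(\gamma_2)$, but what you need for $\ell(\delta_1\cup\delta_2)=\ell(C_2)$ is $\ell(\delta_1)=\ell(\gamma_2)$, and that only follows from the exchange argument when $f_Z$ lies in $\ins(C_1)\cap\ins(C_2)$ (so that $\gamma_1\cup\gamma_2$ separates $f_Z$ and $\delta_1\cup\delta_2$ separates $f_{Z'}$). So the ``case analysis on which component contains $f_Z$'' does not close the equality; the bad case must be \emph{excluded}, not accommodated. Your algorithmic route is the right idea but is only asserted (``any arc of $C_2$ that penetrates $\ins(C_1)$ is forced to encircle $f_Z$'') with no supporting argument.

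The paper spends essentially its entire proof on exactly this point. It first proves a monotonicity claim along the path $P_1,\dots,P_l$ in the underlying tree $T_{\cal S}$ between the basic pieces containing $Z$ and $Z'$: if the crest $Z_i$ lies in $\ins(C_1)$ then so does every $Z_j$ with $j<i$ (and symmetrically for $C_2$), using Property (IV) of $\upDDG/\lowDDG$. This yields a crest separator $S$ on the ridge between consecutive crests $Z_i,Z_{i+1}$ such that $C_1$ crosses that ridge and uses no edge of $\lowDDG(S)$, while $C_2$ (since the two cycles intersect) must use an edge $\{u,v\}$ of $\upDDG(S)$ whose represented shortest path $C_2(u,v)$ is shorter than $l_S(u,v)$. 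A first/last-crossing exchange between $C_2(u,v)$ and the subpath of $C_1$ through the ridge then shows that $Z\notin\ins(C_2)$ would let one shorten $C_1$, a contradiction; hence $f_Z\in\ins(C_2)$. Only after that does the paper run the region-by-region exchange (equivalent to your additivity bookkeeping, which is the cleaner formulation) to build $C$ from $C_2$ by swapping in the boundary arcs of $C_1$. To repair your proof you would need to supply this structural argument, or some substitute that genuinely uses how the algorithm computes $C_1$ and $C_2$; the statement is false for arbitrary minimum separating cycles.
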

\begin {proof}
Assume that $P_1$ and $P_l$ in the underlying tree $T_{\cal D}$ contain $Z$ and 
$Z'$, respectively. Let $\{P_1,P_2\},\{P_2,P_3\},...,\{P_{l-1},P_l\}$ be the path 
between $P_1$ and $P_l$ in $T_{\cal D}$ and let the crest in $P_i$ be $Z_i$ for 
$1\leq i\leq l$ ($Z=Z_1$ and $Z'=Z_l$).  It is shown in \cite{KT13} (Lemma 30)
that if $Z_i\in \ins(C_1)$ then every $Z_j\in \ins(C_1)$ for $j<i$, and if 
$Z_i\in \ins(C_2)$ then every $Z_j\in \ins(C_2)$ for $j>i$. From this and the fact 
that $C_2$ is computed after $C_1$, which means $Z^\prime\notin \ins(C_1)$, there 
is a $Z_i$, $1\leq i<l$, such that $Z_i\in \ins(C_1)$ but $Z_j\not\in \ins(C_1)$ 
for $j>i$. 

Let $D$ be the crest separator for $Z_i$ and $Z_{i+1}$ in the pseudo GMST 
(see Figure~\ref{fig-intersect}). We assume without loss of generality that $Z$ 
(resp. $Z'$) is in the upper (resp. lower) piece by $D$. From Property (VI) and 
the fact that $Z_{i+1}\notin\ins(C_1)$, $C_1$ contains no (shortest path 
represented by) edge in $\lowDDG(D)$. Note that $D$ separates $Z$ from $Z^\prime$ 
in $\hat{H}_X$. From the fact that $C_1$ and $C_2$ cross with each other and the 
fact that $C_1$ contains no edge in $\lowDDG(D)$, $C_2$ has a subpath $C_2(u,v)$
represented by an edge $e=\{u,v\}$ in $\upDDG(D)$, where $u$ and $v$ are in 
different down paths of $D$. 

\begin{figure}[t]
\centerline{\includegraphics[scale=0.8]{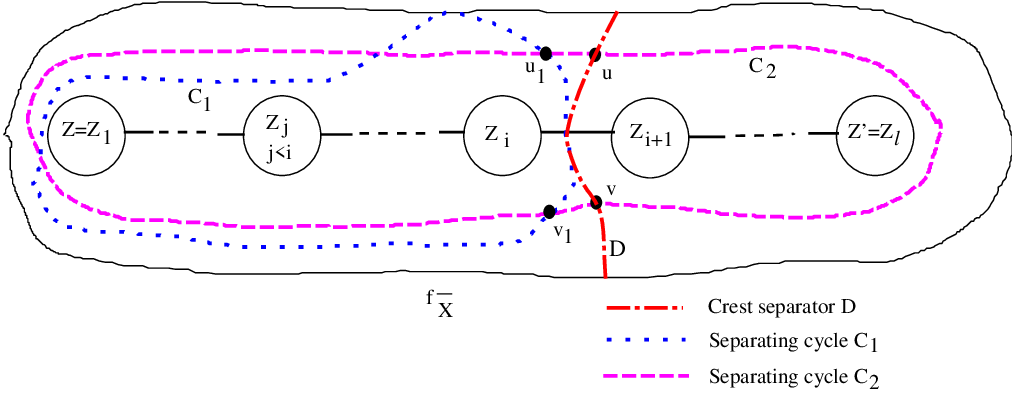}}
\caption{A pair of crossing minimum separating cycles $C_1$ and $C_2$.}
\label{fig-intersect}
\end{figure}

Since $C_1$ and $C_2$ cross with each other, 
they intersect at at least two vertices. Let $u_1$ and $v_1$ be the first vertex 
and last vertex at which cycle $C_2$ intersects cycle $C_1$, respectively, when 
we proceed on $C_2(u,v)$ from $u$ to $v$ (see Figure~\ref{fig-intersect}). Let 
$C_2(u_1,v_1)$ be the subpath of $C_2(u,v)$ connecting $u_1$ and $v_1$. Because
$C_1$ separates $Z_i$ from $Z_{i+1}$, it must contain a subpath $C_1(u_1,v_1)$ 
which connects $u_1$ and $v_1$ and intersects the ridge between $Z_i$ and 
$Z_{i+1}$. Let $C_2(u,u_1)$ (resp. $C_2(v,v_1)$) be the subpath of $C_2(u,v)$ 
between $u$ and $u_1$ (resp. $v$ and $v_1$). Then the length of $C_2(u,v)$ is the 
sum of the lengths of $C_2(u,u_1)$, $C_2(u_1,v_1)$ and $C_2(v,v_1)$. Let $D(u,v)$
be the subpath of $D$ between $u$ and $v$ that contains a top vertex of $D$. By 
Property (VI) and the fact that the closed walk formed by $C_2(u,u_1)$, 
$C_1(u_1,v_1)$, $C_2(v,v_1)$ and $D(u,v)$ does not separate $Z_i$ from 
$f_{\overline{X}}$, $\dist_D(u,v)$ is at most the length of the path concatenated
by $C_2(u,u_1)$, $C_1(u_1,v_1)$ and $C_2(v,v_1)$. From these and because the length 
of $C_2(u,v)$ is smaller than $\dist_D(u,v)$, the length of $C_2(u_1,v_1)$ is 
smaller than that of $C_1(u_1,v_1)$. From this, $Z\in \ins(C_2)$ because otherwise, 
we can replace $C_1(u_1,v_1)$ with $C_2(u_1,v_1)$ to get a separating cycle for $Z$ 
with a smaller length, a contradiction to that $C_1$ is a minimum separating cycle 
for $Z$.

For each connected region $R$ in $\ins(C_1)\setminus \ins(C_2)$, the boundary of 
$R$ consists of a subpath $C_1(R)$ of $C_1$ and a subpath $C_2(R)$ of $C_2$. 
The lengths of $C_1(R)$ and $C_2(R)$ are the same, otherwise, we can get a 
separating cycle for $Z$ or $Z'$ with length smaller than that of $C_1$ or $C_2$, 
respectively, a contradiction to that $C_1$ is a minimum separating cycle for $Z$ 
and $C_2$ is a minimum separating cycle for $Z'$.

We construct the cycle $C$ for the lemma as follows: Initially $C=C_2$. For 
every connected region $R$ in $\ins(C_1)\setminus \ins(C_2)$, we replace 
$C_2(R)$ by $C_1(R)$. Then $\ins(C)=\ins(C_1)\cup \ins(C_2)$ and has the same 
length as that of $C_2$.
\end {proof}

By applying Lemma \ref{cycle-decomposition1} repeatedly,
we get the next Lemma to eliminate a set of crossing minimum separating cycles.

\begin {lemma}
Let $C_1,C_2,...C_t$ be a set of crossing 
minimum separating cycles computed in Step (5) such that 
every $C_i$, $1<i\leq t$, is computed after $C_{i-1}$.
Then there is a cycle $C$ such that 
$\ins(C)=\ins(C_1)\cup\dots\ins(C_t)$ and the length of $C$ is the same as that 
of $C_t$.
\label{cycle-decomposition1}
\end{lemma}

Given a set of separating cycles computed in Steps (1) and (5) in Subroutine
Crest-Separator, our next job is to eliminate the crossing minimum separating
cycles and find a good separator $\AAA_X$ for ${\cal Z}_X$ and $\overline{X}$.
The next lemma shows how to do this (Step (6)).
\begin{lemma}
Given the set of separating cycles of length at most $k$ computed in Subroutine 
Crest-Separator, a good separator $\AAA_X$ for ${\cal Z}_X$ and $\overline{X}$ 
can be computed in $O(|E(\hat{H}_X)|)$ time.
\label{lem-intersect}
\end{lemma}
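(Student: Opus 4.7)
The plan is to process the collection of separating cycles from Steps~(1) and~(4) into a pairwise non-intersecting sub-family whose outermost members constitute $\AAA_X$, then verify the good-separator properties and achieve the running time by a direct planar sweep rather than explicit pairwise merges.

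First I would identify which pairs of input cycles can possibly intersect. Two boundary cycles $C_Z, C_{Z'}$ from Step~(1) bound distinct faces $f_Z, f_{Z'}$ of $G_X$ and hence have disjoint insides. A boundary cycle $C_Z$ from Step~(1) cannot intersect a minimum $(f_{Z'}, f_{\overline X})$-separating cycle $C$ computed in Step~(4) for a crest $Z'$: either $Z\in \ins(C)$ so $C_Z$ is strictly enclosed by $C$, or $Z\notin \ins(C)$, in which case an intersection would let us shorten $C$ along an arc of $C_Z$, contradicting minimality. Hence the only intersecting pairs arise between two minimum separating cycles from Step~(4) for distinct crests, and Lemma~\ref{cycle-decomposition} applies to each such pair.

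Next I would iteratively apply Lemma~\ref{cycle-decomposition}: while two cycles $C_1, C_2$ in the current collection intersect, replace them by the single cycle $C$ with $\ins(C) = \ins(C_1) \cup \ins(C_2)$ produced by the lemma, whose length equals $|C_2| \le k$. Each merge strictly decreases the number of connected components of $\bigcup_i \ins(C_i)$, so the process terminates. The resulting collection is pairwise non-intersecting, so any two cycles are either disjoint or nested; that is, the family of insides is laminar. I would then let $\AAA_X$ consist of the noose-induced edge subsets $A$ (via the construction in the proof of Lemma~\ref{cycle-noose}) of the outermost cycles in this laminar family. Each such cycle has length at most $k$, giving $|\partial(A)| \le k$; every $Z \in {\cal Z}_X$ was initially separated and remains so, since its original enclosing cycle ends up nested inside some outermost cycle; and the outermost cycles have pairwise disjoint insides, so the subsets in $\AAA_X$ are pairwise edge-disjoint. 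Hence $\AAA_X$ is a good separator.

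The main obstacle is achieving $O(|E(H_X)|)$ time without performing the merges one by one. The key observation is that the outermost cycles of the final laminar family are precisely the outer boundaries of the connected components of the union region $U = \bigcup_i \ins(C_i)$, and that by iterated application of Lemma~\ref{cycle-decomposition} each such outer boundary has length at most $k$. So I would compute $\AAA_X$ in a single sweep: scan the edges of every input cycle to mark each face of $H_X$ lying inside at least one of them; this costs $O(|E(H_X)|)$ since the total length of all input cycles is $O(|E(H_X)|)$, because each edge lies on at most two boundary cycles from Step~(1) and the total length of the Step~(4) cycles is $O(|{\cal W}_X|\cdot k) = O(|E(H_X)|)$ by the analysis of Lemma~\ref{lem-all-cycle}. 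A breadth-first traversal of the face-adjacency graph restricted to the marked faces identifies the connected components of $U$, and walking around each component produces its outer boundary cycle. Converting each such boundary cycle to its noose-induced edge subset via Lemma~\ref{cycle-noose} then yields $\AAA_X$ within the claimed $O(|E(H_X)|)$ time.
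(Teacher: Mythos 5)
Your high-level strategy is the same as the paper's: reduce every genuine intersection to Lemma~\ref{cycle-decomposition}, conclude that the outermost cycles of the resulting laminar family have length at most $k$ and yield the good separator, and charge the work to the $O(|E(H_X)|)$ total length of the input cycles. The gap is in the single-sweep implementation. You identify the members of $\AAA_X$ with the connected components of the marked faces under face adjacency, but two cycles whose open insides are \emph{disjoint} --- and which therefore do not ``intersect'' in the sense required by Lemma~\ref{cycle-decomposition} --- can still share boundary edges: a minimum separating cycle for a crest $Z'$ may run along part of another cycle without enclosing its inside, and two minimum separating cycles may share a path while enclosing disjoint regions on opposite sides of it. In that situation the two faces on either side of a shared edge are both marked, your BFS fuses the two regions into one component, and the ``outer boundary'' of that component is essentially the symmetric difference of the two cycles: a closed walk of length up to $2k-2$ that pinches at the endpoints of the shared path, so it is neither a simple cycle (hence gives no noose via Lemma~\ref{cycle-noose}) nor of length at most $k$. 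Lemma~\ref{cycle-decomposition} cannot rescue you here precisely because its hypothesis is that the insides overlap; the correct output in this configuration is to keep the two cycles separate, which your sweep does not do.

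The repair is what the paper's proof does: work with oriented edges rather than faces. Each input cycle is oriented with its inside on the right, so a shared edge whose two sides belong to different cycles' insides contributes two distinct arcs, and a leftmost walk on the arc set starting from the outer face traces each maximal cycle separately while visiting every arc at most once, giving the $O(|E(H_X)|)$ bound. Your counting of the total cycle length (each edge in at most two boundary cycles, $O(|{\cal W}_X|k)=O(|E(H_X)|)$ for the crest cycles) and your verification of properties (1)--(3) of a good separator are consistent with the paper and would go through once the boundary tracing is made orientation-aware.
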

\begin{proof} 
Let $m=|E(H_X)|$ and ${\cal C}$ be the set of $(f_Z,f_{\overline{X}})$-separating 
cycles for all $Z\in {\cal Z}_X$. Let $\Gamma$ be the subgraph of $H_X$ induced by 
the edges of all cycles in ${\cal C}$. We orient each separating cycle 
$C\in {\cal C}$ such that $\ins(C)$ is on the right side when we proceed on $C$ 
following its orientation and give $C$ a distinct integer label $\lambda(C)$. We 
create a directed plane graph $\vec{\Gamma}$ with $V(\vec{\Gamma})=V(\Gamma)$ and 
\begin{eqnarray*}
E(\vec{\Gamma})=\{(u,v)_{\lambda(C)}&|& \{u,v\}\in E(C), C\in {\cal C}, \\
& & \mbox{ and the orientation of $C$ is from $u$ to $v$.}\}.
\end{eqnarray*}
Notice that if edge $\{u,v\}$ of $\Gamma$ appears in multiple cycles then
$\vec{\Gamma}$ may have parallel arcs from $u$ to $v$. For simplicity, we may 
use $(u,v)$ for arc $(u,v)_{\lambda(C)}$ when the label $\lambda(C)$ is not needed 
in the context. For each cycle $C\in {\cal C}$, we denote the corresponding 
oriented cycle in $\vec{\Gamma}$ by $\vec{C}$. The planar embedding of 
$\vec{\Gamma}$ is as follows: For each vertex $u$ in $\vec{\Gamma}$, the embedding 
of $u$ is the point of $\Sigma$ that is the embedding of vertex $u$ in $\Gamma$. 
For each edge $e=\{u,v\}$ in $\Gamma$, let $r_e$ be a region in $\Sigma$ such that 
$e\subseteq r_e$, $r_e$ does not have any point of $\Gamma$ other than $e$, and 
$r_e\cap r_{e'}=\emptyset$ for distinct edges $e$ and $e'$ of $\Gamma$ (see
Figure~\ref{orders} (a)). Each arc $\vec{e}=(u,v)_{\lambda(C)}$ in $\vec{\Gamma}$ 
is embedded as a segment in region $r_e$, $e=\{u,v\}$ (see Figure~\ref{orders} (b)). 
We further require the embeddings of arcs in $\vec{\Gamma}$ satisfying the 
{\em left-embedding property}: For each edge $e=\{u,v\}$ in $\Gamma$, if there is 
at least one arc from $u$ to $v$ and at least one arc from $v$ to $u$ in 
$\vec{\Gamma}$ then for any pair of arcs $\vec{e}=(u,v)$ and $\vec{e'}=(v,u)$, the 
embeddings of $\vec{e}$ and $\vec{e'}$ form an oriented cycle in $r_e$ such that 
none of $f_Z$ and $f_{\overline{X}}$ is on the left side when we proceed on the 
cycle following its orientation (see Figure~\ref{orders} (b)). $\Gamma$ has a 
face which includes $f_{\overline{X}}$ and we take this face as the outer face $f_0$
of $\vec{\Gamma}$. Since each edge of $H_X$ appears in at most one boundary cycle 
$C_Z$, there are $O(m)$ arcs $(u,v)_{\lambda(C_Z)}$ for 
$Z\in {\cal Z}_X\setminus {\cal W}_X$. Since $|{\cal W}_X|=O(m/k)$ and the 
minimum separating cycle $C$ for every $Z\in {\cal W}_X$ has at most $k$ edges, 
there are $O(|{\cal W}_X|k)=O(m)$ arcs $(u,v)_{\lambda(C)}$ in total for all
$Z\in {\cal W}_X$. Therefore, $\vec{\Gamma}$ can be computed in $O(m)$ time. For 
each face $f$ in $\Gamma$, let $\vec{E}(f)$ be the set of arcs $(u,v)$ and $(v,u)$ 
in $\vec{\Gamma}$ for each $\{u,v\}\in E(f)$ in $\Gamma$. 

\begin{figure}[t]
\centerline{\includegraphics[scale=0.6]{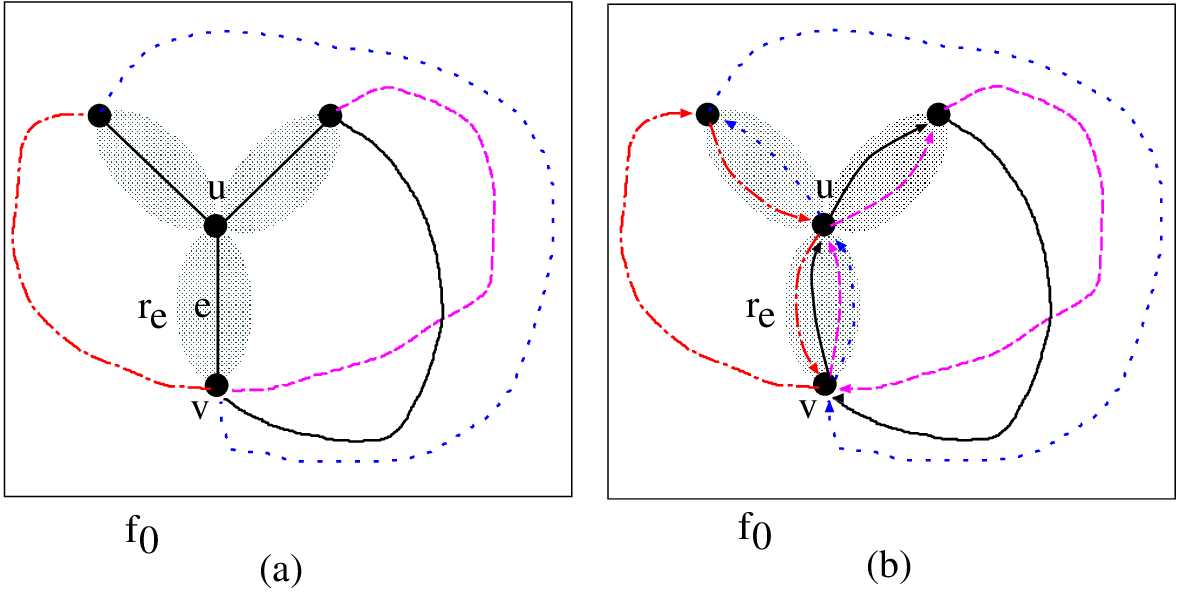}}
\caption{(a) embedding of $\Gamma$ and (b) embedding of $\vec{\Gamma}$.}
\label{orders}
\end{figure}

A search on arc $\vec{e}=(u,v)$ means that we proceed on arc $\vec{e}$ from $u$ 
to $v$. For each arc $\vec{e}=(u,v)_{\lambda(C)}$, we define its {\it next arc}
$\nx(\vec{e})=(v,w)_{\lambda(C)}$ and {\it previous arc} 
$\pv(\vec{e})=(t,u)_{\lambda(C)}$. For arc $\vec{e}=(u,v)_{\lambda(C)}$, let 
$C(\vec{e})$ be the oriented cycle that contains $\vec{e}$ and let 
$L(\vec{e})=\{\vec{h_1}=\nx(\vec{e}),..,\vec{h_t}\}$, $t\geq 1$, be the set of 
outgoing arcs from $v$ on the "left side" of $C(\vec{e})$. We assume that arc 
$\vec{h_i}$, $1\leq i\leq t$, in $L(\vec{e})$ is the $i$th outgoing arc from $v$ 
when we count the arcs incident to $v$ in the counter-clockwise order from 
$\nx(\vec{e})$ to $\vec{e}$. We define the {\em leftmost arc} from $\vec{e}$, 
denoted by $\lm(\vec{e})$, as the $\vec{h_i}\in L(\vec{e})$ with the largest $i$ 
such that $\pv(\vec{h_i})\in\ins(C(\vec{h_1}))\cup\dots\cup\ins(C(\vec{h}_{i-1}))$
(see Figure \ref{leftmost} for an example). For each arc 
$\vec{e}=(u,v)_{\lambda(C)}$, $\lm(\vec{e})$ can be found by checking the arcs 
in $L(\vec{e})$, starting from $\nx(\vec{e})$, in the counter-clockwise order 
they are incident to $v$. A search on a sequence of arcs $\vec{e_1},\vec{e_2},..$ 
is called a {\em leftmost search} if every $\vec{e}_{i+1}$ is $\lm(\vec{e_i})$ 
for $i\geq 1$.
\begin{figure}[t]
\centerline{\includegraphics[scale=0.7]{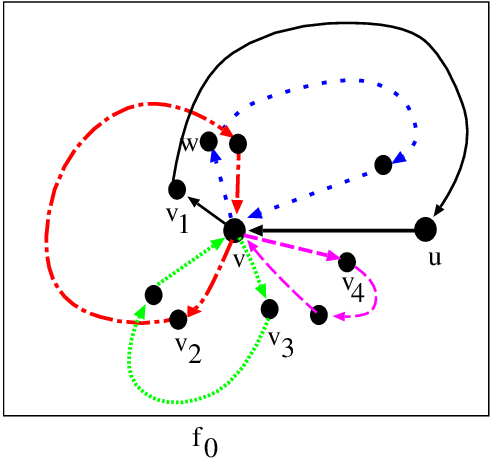}}
\caption{For arc $\vec{e}=(u,v)$, $\nx(\vec{e})=(v,v_1)$, 
$L(\vec{e})=\{(v,v_1),(v,v_2),(v,v_3),(v,v_4)\}$
and $\lm(\vec{e})=(v,v_3)$.}
\label{leftmost}
\end{figure}

By performing a leftmost search on arcs of $\vec{\Gamma}$, starting from an 
arbitrary arc in $\vec{E}(f_0)$, we can find a separating cycle $\vec{C_m}$ such 
that for any cycle $\vec{C}$, $C\in {\cal C}$, if 
$\ins(\vec{C_m})\cap \ins(\vec{C})\neq \emptyset$ then 
$\ins(\vec{C})\subseteq \ins(\vec{C_m})$. We call $\vec{C_m}$ a 
{\em maximal cycle}. According to Lemma~\ref{cycle-decomposition1} and the fact 
that every cycle $C\in {\cal C}$ has a length at most $k$, the length of 
$\vec{C_m}$ is at most $k$. 

After finding $\vec{C_m}$, we delete arcs in cycles $\vec{C}$ from $\vec{\Gamma}$ 
if $\ins(\vec{C})\subseteq \ins(\vec{C_m})$ to update $\vec{\Gamma}$. We continue 
the search on the updated $\vec{\Gamma}$ from an arbitrary arc in the updated 
$\vec{E}(f_0)$ until all arcs are deleted. Then for each $Z\in {\cal Z}_X$, there 
is a unique maximal cycle which separates $Z$ and $\overline{X}$. For every arc 
$\vec{e}$ in a maximal cycle $\vec{C_m}$, all the arcs in $\ins(\vec{C_m})$ are 
deleted after $\vec{C_m}$ is found. Each arc is counted $O(1)$ time in the 
computation for all leftmost arc searches. Therefore, the total time complexity 
of finding all the maximal cycles is $O(m)$.

For each maximal cycle $\vec{C_m}$ in $\vec{\Gamma}$ computed above, let $C_m$
be the cycle in $H_X$ consisting of edges corresponding to the arcs in 
$\vec{C_m}$. For each $Z\in {\cal Z}_X$, there is a cycle $C_m$ which separates 
$Z$ and $\overline{X}$. Let $A_Z$ be the edge subset induced by $C_m$. Then (1) 
$|\partial(A_Z)|\leq k$ (if $A_Z$ is induced by the boundary cycle $C_Z$ then 
$|\partial(A_Z)|=|C_Z|\leq k$, otherwise $A_Z$ is induced by a cycle $C_m$ of 
length at most $k$ as shown in Lemma~\ref{cycle-decomposition}, implying
$|\partial(A_Z)|\leq k$). Due to the way we find the maximal cycles above, (2)
for every $Z\in {\cal Z}_X$, there is exactly one subset $A_Z\in \AAA_X$
separating $Z$ and $\overline{X}$; and (3) for distinct $A_Z,A_{Z'}\in \AAA_X$,
$A_Z\cap A_{Z'}=\emptyset$. Therefore, $\AAA_X$ is a good-separator for
${\cal Z}_X$ and $\overline{X}$.
\end{proof}

We are ready to show Theorem~\ref{theo-3} which is re-stated below.
\begin{theorem} There is an algorithm which given a planar graph $G$ of $n$
vertices and an integer $k$, in $O(nk^2)$ time, either constructs a
branch-decomposition of $G$ with width at most $(2+\delta)k$ or a
$(k+1)\times \ceil{\frac{k+1}{2}}$ cylinder minor of $G$, where $\delta>0$
is a constant.
\label{theo-5}
\end{theorem}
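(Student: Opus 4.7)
The plan is to reuse Procedure Branch-Minor from Section 3 verbatim, replacing only Steps 2(b)(c) by Subroutine Crest-Separator. Since Lemmas~\ref{lem:minor}, \ref{lem:branchwidth}, and \ref{lem:decompose} and the good-separator framework are unchanged, the correctness of the outer recursion reduces to verifying that Crest-Separator either outputs a good separator $\AAA_X$ for ${\cal Z}_X$ and $\overline{X}$ with every $|\partial(A_Z)|\leq k$, or a $(k+1)\times \ceil{\frac{k+1}{2}}$ cylinder minor of $G$. Once that is in hand, the width and time bounds reduce to counting.

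For correctness of Crest-Separator, Steps (1)--(3) assemble a candidate $(f_Z,f_{\overline{X}})$-separating cycle for every $Z\in {\cal Z}_X$: the boundary cycle $C_Z$ when $|C_Z|\leq k$, and for each remaining crest $Z\in {\cal W}_X$ a minimum $(f_Z,f_{\overline X})$-separating cycle obtained from the GMST $(H_X,{\cal S}_X,{\cal W}_X)$ together with $\upDDG(S)$ and $\lowDDG(S)$, as produced by Lemma~\ref{lem-all-cycle}. If any such minimum exceeds $k$, Lemma~\ref{lem:minor} extracts the cylinder minor and we are done. Otherwise every cycle has length at most $k$, and Lemma~\ref{lem-intersect} combines them into the required good separator $\AAA_X$ using the left-most search on the arc-oriented union $\vec{\Gamma}$.

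The width bound follows from exactly the same computation as in the proof of Theorem~\ref{theo-4}: with $h=d_2\leq \ceil{\frac{\alpha k}{2}}+\ceil{\frac{k+1}{2}}$, Lemma~\ref{lem:branchwidth} yields width at most $k+2h\leq (2+\alpha)k+4\leq (2+\delta)k$ once $\delta$ is chosen as the smallest constant with $\delta k\geq \alpha k+4$. For the running time, within a single recursive call on $G|U$ the layer tree takes $O(M)$ time and, for each level 1 node $X$ of $\LT(G|U,\partial(U))$, Crest-Separator spends $O(mk^2)$ on face separating cycles by Lemma~\ref{lem-all-cycle} and $O(m)$ on assembling $\AAA_X$ by Lemma~\ref{lem-intersect}, where $m=|E(H_X)|$ and $M$ is the number of edges in $G[\reach_{G|U}(\partial(U),d_2)]$. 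Since the edge sets of the hypergraphs $(G|\overline{X})|{\cal Z}_X$ over level 1 nodes $X$ are pairwise disjoint, summing over $X$ gives $O(Mk^2)$ per recursive call; and since each vertex of $G$ participates in only $O(1/\alpha)=O(1)$ such Step 2 computations across the recursion tree, the total cost is $O(nk^2)$.

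The main technical obstacle I anticipate is the loss of shortest-path uniqueness. In Section 3 we could perturb edge weights to force uniqueness, producing pairwise non-crossing minimum separating cycles for free; here the integer-weight shortest-path subroutine used inside Lemma~\ref{lem-ddg} forbids that trick, so distinct crests can yield genuinely intersecting minimum separating cycles. Lemmas~\ref{cycle-decomposition} and~\ref{lem-intersect} exist precisely to repair this: the former shows that whenever two computed cycles intersect they can be replaced by a single enclosing cycle whose length is already that of one of the two originals, and the latter turns this into an $O(m)$-time left-most search that produces a laminar family of maximal cycles, each of length at most $k$. Verifying that this left-most search does not miss any crest and does not introduce a cycle of length greater than $k$ is the delicate step, and it is what distinguishes this argument from the cleaner $O(n\log^3 n)$ version.
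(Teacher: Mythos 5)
Your proposal is correct and follows essentially the same route as the paper: replace Steps 2(b)(c) of Branch-Minor by Subroutine Crest-Separator, invoke Lemmas~\ref{lem-all-cycle} and \ref{lem-intersect} for the per-piece cost and the good separator, inherit the width bound from the proof of Theorem~\ref{theo-4}, and sum $O(m_xk^2)$ over level-1 nodes and $O(1/\alpha)$ recursion levels to get $O(nk^2)$. Your closing remarks on the loss of shortest-path uniqueness and the role of Lemmas~\ref{cycle-decomposition} and \ref{lem-intersect} accurately identify the point where this argument genuinely diverges from the $O(n\log^3 n)$ version, which is exactly how the paper structures the section.
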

\begin{proof}
First, as shown in Lemma~\ref{lem-intersect}, a good separator $\AAA_X$ for
${\cal Z}_X$ and $\overline{X}$ is computed by Subroutine Crest-Separator.
From this and as shown in the proof of Theorem~\ref{theo-4}, given a planar 
graph $G$ and integer $k$, our algorithm computes a branch-decomposition of $G$ 
with width at most $(2+\delta)k$ or a $(k+1)\times \ceil{\frac{k+1}{2}}$ 
cylinder minor of $G$.

Let $M,m_x,m$ be the numbers of edges in 
$G[\reach_{G|U}(\partial(U),d_2)],(G|\overline{X})|{\cal Z}_X, \hat{H}_X$, 
respectively. Then $m=O(m_x)$. By Lemmas~\ref{lem-all-cycle} and \ref{lem-intersect}, 
Subroutine Crest-Separator takes $O(mk^2)$ time.
For distinct level 1 nodes $X$ and $X'$, the edge sets of subgraphs 
$(G|\overline{X})|{\cal Z}_X$ and $(G|\overline{X'})|{\cal Z}_{X'}$ are disjoint. 
From this, $\sum_{X:\mbox{level 1 node}} m_x=O(M)$. Therefore, Step 2 of
Procedure Branch-Minor($G|U)$ takes 
$\sum_{X:\mbox{level 1 node}} O(m_xk^2)=O(Mk^2)$ time when Steps 2(b)(c) are
replaced by Subroutine Crest\_Separator.

The time for other steps in Procedure Branch-Minor($G|U$) is $O(M)$. The number
of recursive calls in which each vertex of $G|U$ is involved in the computation
of Step 2 is $O(\frac{1}{\alpha})=O(1)$. Therefore, we get an algorithm with
running time $O(nk^2)$.
\end{proof}

\section{Concluding remarks}
\label{sec-con}

If we modify the definition for $d_2$ in Section~\ref{sec-alg1} from 
$d_2=d_1+\ceil{\frac{k+1}{2}}$
to $d_2=d_1+(k+1)$, we get an algorithm which given a planar graph $G$
and integer $k>0$, in $\min\{O(n\log^3 n),O(nk^2)\}$ time either computes a 
branch-decomposition of $G$ with width at most $(3+\delta)k$, where $\delta>0$
is a constant, or a $(k+1)\times (k+1)$ cylinder minor (or grid minor). 
It is open whether there is an $O(n)$ time constant factor approximation 
algorithm for the branchwidth and largest grid (cylinder) minors. The algorithm
of this paper can be used to reduce the vertex cut set size in the recursive
division of planar graphs with small branchwidth in near linear time. It is 
interesting to investigate the applications of the algorithm in this paper.
Such applications include to improve the efficiency of graph decomposition 
based algorithms for problems in planar graphs.

\end{document}